%% file: arxiv.tex
\documentclass[superscriptaddress,aps,prl,nofootinbib,reprint]{revtex4-2}
\usepackage[T1]{fontenc}
\usepackage[utf8]{inputenc}
\usepackage{color}
\usepackage{babel}
\usepackage{amsmath}
\usepackage{amsthm}
\usepackage{amssymb}
\usepackage{dsfont}
\usepackage{physics}
\usepackage{graphicx}
\usepackage{import}
\usepackage{enumitem}
\usepackage[normalem]{ulem}
\usepackage[unicode=true,pdfusetitle,
bookmarks=true,bookmarksnumbered=false,bookmarksopen=false,
breaklinks=false,pdfborder={0 0 0},pdfborderstyle={},backref=false,colorlinks=true]
{hyperref}
\hypersetup{
allcolors=magenta}

\theoremstyle{plain}

\newtheorem{theorem}{Theorem}

\newtheorem{lemma}{Lemma}

\theoremstyle{plain}

\theoremstyle{plain}

\ifx\proof\undefined
\newenvironment{proof}[1][\protect\proofname]{\par
\normalfont\topsep6\p@\@plus6\p@\relax
\trivlist
\itemindent\parindent
\item[\hskip\labelsep\scshape #1]\ignorespaces
}{%
\endtrivlist\@endpefalse
}
\providecommand{\proofname}{Proof}
\fi

\usepackage{times}
\usepackage{txfonts}
\usepackage{braket}
\usepackage{colortbl}

\makeatother

\providecommand{\propositionname}{Proposition}
\providecommand{\theoremname}{Theorem}
\providecommand{\lemmaname}{Lemma}

\begin{document}
\title{Fundamental limitations on entanglement extraction from purity}

\author{Tulja Varun Kondra$^\ddagger$}
\email{tuljavarun@gmail.com}
\affiliation{Heinrich Heine University D\"{u}sseldorf, Faculty of Mathematics and Natural Sciences, Institute for Theoretical Physics III}

\author{Pedro Barrios Hita$^\ddagger$}
\email{pbhita99@gmail.com}
\affiliation{Institute of Software Technology, German Aerospace Center (DLR), Sankt Augustin, Germany}
\affiliation{Heinrich Heine University D\"{u}sseldorf, Faculty of Mathematics and Natural Sciences, Institute for Theoretical Physics III}
\author{Justus Neumann}
\affiliation{Heinrich Heine University D\"{u}sseldorf, Faculty of Mathematics and Natural Sciences, Institute for Theoretical Physics III}
\author{Hermann Kampermann}
\affiliation{Heinrich Heine University D\"{u}sseldorf, Faculty of Mathematics and Natural Sciences, Institute for Theoretical Physics III}
\author{Dagmar Bruß}
\affiliation{Heinrich Heine University D\"{u}sseldorf, Faculty of Mathematics and Natural Sciences, Institute for Theoretical Physics III}

\begin{abstract}
States of sufficiently low purity are separable and cannot be entangled by unital (purity-non-generating) operations. Since high-purity states are experimentally demanding, it is natural to ask how much purity a state must possess to enable entanglement generation. Absolutely separable states remain separable under all deterministic unital channels, and so cannot deterministically generate entanglement in this setting. We show, however, that some absolutely separable states can generate entanglement via probabilistic protocols that do not produce purity. This motivates the study of states that fail to generate entanglement with \emph{any} non-zero probability, which we call \emph{completely absolutely separable}; we give a full characterization of this class. Along the way, we derive a novel sufficient condition for separability that depends only on the largest and smallest  eigenvalues, along with the smallest local dimension and is independent of all previously known spectral separability criteria. 
\end{abstract}

\maketitle

\def\thefootnote{$\ddagger$}\footnotetext{These authors contributed equally to this work.}

Quantum entanglement is a fundamental resource in quantum information science, playing a key role in tasks such as quantum cryptography \cite{RevModPhys.74.145}, quantum teleportation \cite{bennett1993teleporting} and quantum metrology \cite{PhysRevLett.96.010401}. The investigation of entanglement and its properties has not only deepened our understanding of quantum mechanics, but has also shown how quantum effects can be controlled and exploited for practical applications \cite{RevModPhys.81.865}. In a broader context, quantum resource theories provide a unified mathematical framework for studying a range of other quantum resources, such as coherence \cite{RevModPhys.89.041003}, athermality \cite{Lostaglio_2019}, and magic \cite{Veitch_2014}. Purity is one such resource, quantifying the deviation of a quantum state from the maximally mixed state \cite{Streltsov_2018}. In many experimental platforms, the preparation of high-purity states is itself a demanding task, as unavoidable noise, decoherence, and imperfect control tend to introduce mixing \cite{RevModPhys.75.715,PhysRevLett.93.040501,Krantz_2019}. As a result, the ability to generate and maintain high-purity states is often a significant practical limitation, making it important to understand what can be achieved under restricted purity.

Therefore, a natural question is: \emph{how much purity is required to enable entanglement generation?} To address this question, we consider the most general physical processes that do not generate purity, namely unital channels \cite{Streltsov_2018,Mendl_2009,Gour_2015}. This setting captures the fundamental interplay between purity and entanglement. It is known that states sufficiently close to the maximally mixed state cannot be transformed into entangled states by any unital channel \cite{PhysRevA.66.062311,PhysRevA.75.062330}. This naturally leads to the problem of characterizing those states that are fundamentally incapable of producing entanglement under purity-non-generating dynamics. This set coincides with the class of so-called \emph{absolutely separable} (AS) states \cite{PhysRevA.63.032307,PhysRevA.108.042402}, originally introduced as states that remain separable under all global unitary operations. A complete characterization of such states is known when one of the local dimensions is equal to two \cite{PhysRevA.88.062330}. For higher local dimensions, however, the problem remains open and is known as the longstanding \emph{separability from spectrum} problem \cite{KnillSeparabilitySpectrum}.

Even though a complete characterization is not known, the positive-partial-transpose (PPT) criterion for separability \cite{PhysRevLett.77.1413,Horodecki_1996} can be used to provide necessary conditions in arbitrary local dimensions. These conditions are expressed as linear matrix inequalities and are tight for small local dimensions (in particular, when one of the local dimensions is two). However, for higher local dimensions it remains an open conjecture whether the PPT criterion can detect all absolutely separable states \cite{arunachalam2015absoluteseparabilitydeterminedpartial}. Other variants of necessary conditions can be obtained, for instance, from the reduction criterion \cite{JIVULESCU2015276} and from bounds based on conditional entropy \cite{PhysRevA.96.062102}. On the other hand, various sufficient conditions have also been proposed, which approximate the set of absolutely separable states from below \cite{PhysRevA.66.062311,PhysRevA.75.062330,Abellanet_Vidal_2025,Lewenstein_2016}.

In recent years, there has been growing interest in probabilistic protocols \cite{PhysRevLett.128.110505,Regula2022tightconstraints}. Unlike deterministic transformations, probabilistic protocols allow for a nonzero probability of failure and only require the desired transformation to occur conditionally upon success. This additional flexibility can  enlarge the set of achievable state transformations and, in particular, allows one to overcome limitations that are unavoidable in the deterministic setting \cite{PhysRevA.107.042401,Regula_2024}. These developments indicate that any fundamental constraint on converting purity into entanglement must account for such general probabilistic transformations, rather than only trace-preserving ones. In particular, a purely deterministic notion of absolute separability may fail to capture the strongest possible limitation to entanglement generation from purity.

In this letter, we allow for the most general probabilistic processes that do not generate purity, even when conditioned on any individual outcome; we refer to such processes as \emph{probabilistic unital channels}. We first show that certain absolutely separable states can generate entanglement with nonzero probability under suitable probabilistic unital channels. This demonstrates that absolute separability does not capture the ultimate purity based limitation on entanglement generation. Motivated by this observation, we introduce a stricter notion, which we call \emph{completely absolutely separable} (CAS) states, defined as bipartite states that cannot generate entanglement with any nonzero probability under any probabilistic unital channel. We show that the set of CAS states admits a remarkably simple characterization in terms of necessary and sufficient spectral conditions, valid for arbitrary local dimensions. Furthermore, we establish an interesting connection between this newly introduced notion of CAS and the conventional definition of absolute separability. In particular, we show that CAS states are precisely those states that remain absolutely separable even after attaching (via tensor product) a maximally mixed ancillary system of arbitrary dimension, to one of the subsystems. 

Our analysis yields a new sufficient condition for separability, expressed solely in terms of the largest and smallest eigenvalues of the state and the smallest local dimension. This condition is independent of all previously known sufficient criteria for absolute separability \cite{PhysRevA.66.062311,PhysRevA.75.062330,Abellanet_Vidal_2025,Lewenstein_2016}. To the best of our knowledge, all earlier sufficient conditions depend on the global dimension rather than the smallest local dimension. As a consequence, our criterion certifies absolute separability for states beyond the reach of all previously known sufficient criteria (see Fig.~\ref{fig:fig_1}). Indeed, we explicitly construct (see Sec.~G of the Supplemental Material), for every bipartite system with unequal local dimensions, an absolutely separable state $\tilde{\rho}$ that escapes detection by all existing sufficient conditions but is certified by ours.
\begin{figure}[t]
    \centering
    \def\svgwidth{\linewidth}
    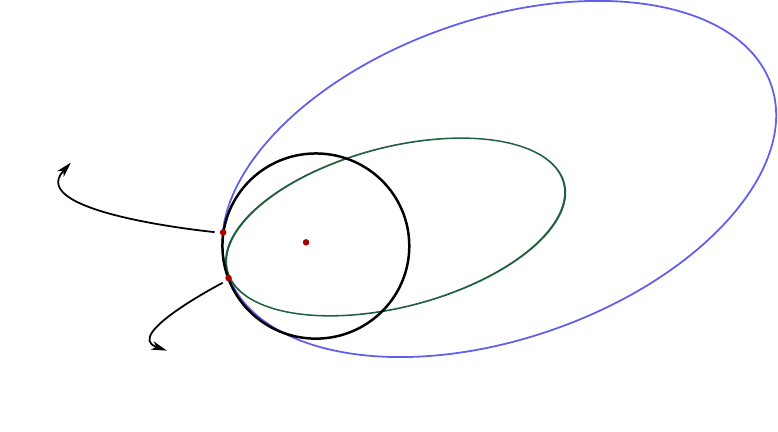
    \caption{Graphical representation of the absolutely separable set (AS), completely absolutely separable set (CAS) and the separable ball introduced in~\cite{PhysRevA.66.062311} (black circle) for unequal local dimensions $d_A$ and $d_B$. The state $(\frac{1}{d_Ad_B-1},...,\frac{1}{d_Ad_B-1},0)$ is on the boundary of both the purity ball and the AS set \cite{hildebrandappt}. The state $\frac{\mathds{1}-\ketbra{\Phi^+}{\Phi^+}^{\Gamma}}{(d_Ad_B)-1}$ ($\Gamma$ is the partial transpose of $B$) is on the boundary of every set. The state $\tilde{\rho}$ lies outside the purity ball. Moreover, it remains undetected by every known criterion for absolute separability in the literature. However, it lies within CAS and thus AS (see Sec.~G of the Supplemental Material).}
    \label{fig:fig_1}
\end{figure}

\emph{Entanglement extraction from purity.---}
A bipartite state $\rho^{AB}$ is called absolutely separable if
\begin{equation}
  U_{AB}\rho^{AB}U_{AB}^\dagger \in \mathcal{S}  
\end{equation}
for every global unitary $U_{AB}$, where $\mathcal{S}$ denotes the set of separable states. Absolute separability has been extensively studied; see, for example, Refs.~\cite{PhysRevA.64.012316,PhysRevA.108.042402,PhysRevA.88.062330,KnillSeparabilitySpectrum,arunachalam2015absoluteseparabilitydeterminedpartial,PhysRevA.103.052431}. Considerable effort has been devoted to characterizing this set.

An equivalent formulation is that absolutely separable states are precisely those bipartite states that remain separable under all unital dynamics. More explicitly, $\rho^{AB}$ is absolutely separable if and only if
\begin{equation}
    \Lambda[\rho^{AB}] \in \mathcal{S}
\end{equation}
for every completely positive trace-preserving (CPTP) map $\Lambda$ satisfying $\Lambda(\mathds{1})=\mathds{1}$.

To see the equivalence between these two definitions of absolute separability, first note that every unitary channel is unital, so invariance under all unital channels immediately implies invariance under all global unitaries. Conversely, if $\rho^{AB}$ is absolutely separable and $\Lambda$ is any unital channel, then $\Lambda[\rho^{AB}]$ is majorized by $\rho^{AB}$. This implies that there exist probabilities $\{p_i\}$ and unitaries $\{U_i\}$ such that~\cite{spekkens,Streltsov_2018}
\begin{equation}
    \Lambda[\rho^{AB}]=\sum_i p_i\, U_i \rho^{AB} U_i^\dagger.
\end{equation}
Since each state $U_i \rho^{AB} U_i^\dagger$ is separable by absolute separability, and since the set $\mathcal{S}$ is convex, it follows that $\Lambda[\rho^{AB}]$ is also separable. Hence the two formulations are equivalent. Therefore, absolutely separable states are precisely those states from which entanglement cannot be generated by any deterministic physical process that preserves the maximally mixed state.

Now consider the absolutely separable two-qubit state
\begin{equation}
\label{asnotcas}
    \rho=\frac{1}{3}\bigl(\ketbra{00}{00}+\ketbra{01}{01}+\ketbra{10}{10}\bigr).
\end{equation}
In Sec.~C of the Supplemental Material, we show that there exists a stochastic quantum operation $\tilde\Lambda$ such that
\begin{equation}\label{generation}
    \tilde\Lambda[\mathds{1}] = \frac{5}{12}\mathds{1}
    \quad \text{and} \quad
    \tilde\Lambda[\rho] = \frac{2}{9}\rho_W,
\end{equation}
where $\rho_W$ is an entangled Werner state. Here, $\tilde\Lambda$ is a completely positive, trace-non-increasing map and therefore corresponds to a valid quantum operation associated with a particular measurement outcome. In this sense, it describes a stochastic process. Moreover, $\tilde\Lambda$ is purity non-generating (unital), in the sense that it does not create purity, even conditionally on the successful outcome. Nevertheless, it probabilistically generates entanglement from the absolutely separable state $\rho$ (see Eq. (\ref{generation})). Note that, any probabilistic unital channel can be completed to a deterministic unital channel by assigning the failure outcome to the preparation of the maximally mixed state. Consequently, every probabilistic unital channel can be embedded in a deterministic unital channel.

This example shows that some absolutely separable states can still be used to generate entanglement with nonzero probability. This naturally motivates the question of characterizing all bipartite states from which entanglement cannot be generated with any nonzero probability. Such states represent the most fundamental limitation on entanglement generation from purity-non-generating processes.

We define a bipartite state $\rho^{AB}$ to be \emph{completely absolutely separable} if
\begin{equation}\label{CAS}
    \frac{\tilde\Lambda[\rho^{AB}]}{\Tr[\tilde\Lambda[\rho^{AB}]]} \in \mathcal{S}
\end{equation}
for every probabilistic unital channel $\tilde\Lambda$, i.e. if no physical
process can generate entanglement from it. We now state our main result.
\begin{theorem}\label{main}
   A bipartite state $\rho^{AB}$ is completely absolutely separable if and only if
\begin{equation}\label{theorem}
    R(\rho^{AB}) \le \frac{d+1}{d-1}.
\end{equation}
Here $R(\rho^{AB}) := \frac{\lambda_{\max}}{\lambda_{\min}}$ is the ratio of the largest and smallest eigenvalues of $\rho^{AB}$, and $d := \min\{d_A,d_B\}$ is the smaller local dimension.
\end{theorem}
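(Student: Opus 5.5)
The proof splits into a reduction to spectral data, a necessity construction, and a sufficiency argument. We write $\mathcal{D}=d_Ad_B$ for the global dimension.

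\emph{Reduction to the eigenvalue window.} Let $\tilde\Lambda$ be a probabilistic unital channel, so $\tilde\Lambda[\mathds{1}]=c\,\mathds{1}$ for some $c>0$. The operator inequalities $\lambda_{\min}\mathds{1}\le\rho^{AB}\le\lambda_{\max}\mathds{1}$ are preserved by the positive map $\tilde\Lambda$. Hence
\begin{equation*}
c\lambda_{\min}\mathds{1}\le\tilde\Lambda[\rho^{AB}]\le c\lambda_{\max}\mathds{1}.
\end{equation*}
So every normalized output $\sigma$ satisfies $R(\sigma)\le R(\rho^{AB})$. The problem therefore reduces to the following statement. Let $\mathcal{R}_r$ be the set of states with eigenvalue ratio at most $r$. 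Then every state in $\mathcal{R}_r$ is separable if and only if $r\le\frac{d+1}{d-1}$.

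The ``if'' direction of the theorem is then the claim that $\mathcal{R}_{(d+1)/(d-1)}\subseteq\mathcal{S}$. The ``only if'' direction needs an explicit channel that outputs an entangled state whenever $R>\frac{d+1}{d-1}$.

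\emph{Necessity (construction).} Suppose $R(\rho^{AB})>\frac{d+1}{d-1}$. Let $P_{\max}$ and $P_{\min}$ be the rank-one projectors onto eigenvectors of $\rho^{AB}$ for $\lambda_{\max}$ and $\lambda_{\min}$. Without loss of generality $d=d_A\le d_B$. Fix a $d$-dimensional subspace of $B$, and let $P_{\rm anti}$ be the projector onto the antisymmetric subspace of that $d\otimes d$ block. Define the measure-and-prepare map
\begin{equation*}
\tilde\Lambda[X]=\Tr[P_{\max}X]\,P_{\rm anti}+\Tr[P_{\min}X]\,(\mathds{1}-P_{\rm anti}).
\end{equation*}
This map is completely positive and trace non-increasing after rescaling. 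It satisfies $\tilde\Lambda[\mathds{1}]=\mathds{1}$, and
\begin{equation*}
\tilde\Lambda[\rho^{AB}]=\lambda_{\max}P_{\rm anti}+\lambda_{\min}(\mathds{1}-P_{\rm anti}).
\end{equation*}
Next apply the local projection onto the $d$-dimensional subspace of $B$. Local projections map separable states to (unnormalized) separable states, so it suffices to show that the projected state is entangled. The projected state is a Werner state on $d\otimes d$. Let $F$ be the flip operator. Then
\begin{equation*}
\Tr[F\,\cdot\,]=-\lambda_{\max}\tfrac{d(d-1)}{2}+\lambda_{\min}\tfrac{d(d+1)}{2},
\end{equation*}
which is negative exactly when $R>\frac{d+1}{d-1}$. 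A negative value of $\Tr[F\,\cdot\,]$ certifies entanglement of a Werner state, so necessity follows. As a consistency check, at equality this is the boundary state $\frac{\mathds{1}-\ketbra{\Phi^+}{\Phi^+}^\Gamma}{\mathcal{D}-1}$ of Fig.~\ref{fig:fig_1}.

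\emph{Sufficiency (main obstacle).} We must show that every $\sigma$ with $a\mathds{1}\le\sigma\le\frac{d+1}{d-1}a\mathds{1}$ is separable. Normalize to $a=1$. Then $\sigma-\mathds{1}$ lies in the convex set $\{X:0\le X\le s\mathds{1}\}$ with $s=\frac{2}{d-1}$. The extreme points of this set are $sP$ for projectors $P$. Since $\mathcal{S}$ is a convex cone, it suffices to prove that
\begin{equation*}
\mathds{1}+\tfrac{2}{d-1}P
\end{equation*}
is separable for every projector $P$ of every rank.

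This is the step I expect to be hard, because the bound must depend only on $d$ and not on $\mathcal{D}$. Global ball arguments of Gurvits--Barnum type give only $\mathcal{D}$-dependent constants.

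My first attempt would be to decompose $P=\sum_i\ketbra{\psi_i}{\psi_i}$ into orthonormal vectors and to use Schmidt decompositions. For a single vector, $\alpha\mathds{1}+s\ketbra{\psi}{\psi}$ is separable once $s/\alpha\le d$, by the isotropic-state bound with Schmidt rank at most $d$. The difficulty is distributing the identity among the $\ketbra{\psi_i}{\psi_i}$ when $\operatorname{rank}P$ is large. To handle this, I would instead use the identity
\begin{equation*}
\mathds{1}+sP=(1+s)\Bigl(\mathds{1}-\tfrac{s}{1+s}P^\perp\Bigr).
\end{equation*}
I would then try to prove the complementary statement that $\mathds{1}-tQ$ is separable for $t\le\frac{2}{d+1}$ and every projector $Q$. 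For this I would pair the identity with operators of the form $\mathds{1}-\ketbra{\phi}{\phi}^\Gamma$, whose separability at exactly this ratio is the Werner/isotropic boundary.

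Finally, I would combine the two reductions by choosing whichever of $P$ and $P^\perp$ has rank at most $\mathcal{D}/2$. Whether this combination closes the argument is the main uncertainty in the plan.
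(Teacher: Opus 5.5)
Your reduction via monotonicity of $R$ is the same as the paper's. Your necessity argument is correct and more self-contained than the paper's. The paper builds the entangled target $\omega_t=(\mathds{1}-t\ketbra{\Phi^+}{\Phi^+}^\Gamma)/(D-t)$ with $R(\omega_t)=R(\rho)$ and then invokes its general criterion for probabilistic unital transformations (Sec.~D). Your two-outcome measure-and-prepare map onto $P_{\rm anti}$ and $\mathds{1}-P_{\rm anti}$, followed by a local projection and the flip witness, certifies entanglement directly. Your ``consistency check'' holds only for $d_A=d_B$: for $d_A<d_B$ your equality-case state has two distinct eigenvalues, while $\mathds{1}-\ketbra{\Phi^+}{\Phi^+}^\Gamma$ has three.

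The genuine gap is the ``if'' direction, which is the heart of the theorem. Your reduction to separability of $\mathds{1}+\frac{2}{d-1}P$ for every projector $P$ is valid, but you do not prove it, and the tool you propose for it is false. For $d\ge3$, $\alpha\mathds{1}+s\ketbra{\psi}{\psi}$ need not be separable when $s/\alpha\le d$ and $\psi$ has Schmidt rank at most $d$. Take $\psi$ a Bell state on a $2\otimes2$ corner. Projecting locally onto that corner leaves $\alpha\mathds{1}_4+s\ketbra{\psi}{\psi}$, whose partial transpose has eigenvalue $\alpha-s/2<0$ once $s/\alpha>2$. The isotropic value $d$ applies only to maximally entangled $\psi$. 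In general the threshold is $1/\sqrt{\lambda_1\lambda_2}$, where $\lambda_1,\lambda_2$ are the two largest Schmidt weights, and this can be as small as $2$. With that worst case, sharing the identity in scalar portions among the rank-one pieces of $P$ is only guaranteed to work for $\operatorname{rank}P\le d-1$. The identity $\mathds{1}+sP=(1+s)(\mathds{1}-\frac{s}{1+s}P^\perp)$ merely restates the same problem for $P^\perp$. So choosing the smaller of $P,P^\perp$ leaves all intermediate ranks open, which is exactly where a $\mathcal{D}$-independent argument is required.

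The missing ingredient is the paper's Lemma~1, and your own reduction shows why it is exactly what is needed. The dual of the separable cone is the block-positive cone, so $\mathds{1}+sP$ is separable for all $P$ iff $\Tr W+s\Tr(PW)\ge0$ for every block-positive $W$. Minimizing over $P$ (the projector onto the negative eigenspace of $W$) turns this, at $s=\frac{2}{d-1}$, into $\Tr W_-\le\frac{d-1}{2}\Tr W$, i.e.\ $\|W\|_1\le d\,\Tr W$. The paper proves this bound as follows:
\begin{enumerate}[label=(\alph*)]
\item Write $W=\sum_{i,j=1}^{d_A}\ket{i}\bra{j}\otimes\widetilde W_{ij}$ over the smaller factor.
\item Apply block-positivity to $(\alpha\ket{i}+\beta\ket{j})\otimes\ket{\psi}$ to get $|\bra{\psi}\widetilde W_{ij}\ket{\psi}|\le\frac12\bigl(\bra{\psi}\widetilde W_{ii}\ket{\psi}+\bra{\psi}\widetilde W_{jj}\ket{\psi}\bigr)$.
\item Evaluate $|\Tr(U\widetilde W_{ij})|$ in an eigenbasis of $U$ to obtain $\|\widetilde W_{ij}\|_1\le\frac12(t_i+t_j)$ with $t_i=\Tr\widetilde W_{ii}$.
\item Sum over the $d_A^2$ blocks; the count $d_A^2$ is where $d$ replaces $\mathcal{D}$.
\end{enumerate}
With $\Tr W=1$ and $a^+=\Tr W_+\le\frac{d+1}{2}$, one then gets $\Tr(\rho W)\ge\lambda_{\min}a^+-\lambda_{\max}(a^+-1)\ge0$ whenever $R(\rho)\le\frac{d+1}{d-1}$. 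Without this bound, or a primal decomposition that works for all ranks, your sufficiency direction remains unproven.
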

This result provides a complete and remarkably simple spectral characterization of completely absolutely separable states. In contrast, a complete characterization of absolutely separable states remains open. Note that, completely absolutely separable states must be full-rank (see Supplemental Material, Sec.~D), so the quantity $R(\rho^{AB})$ is always well defined. 

The proof proceeds in three steps: First, we show that every state satisfying \(R(\rho^{AB}) \le \frac{d+1}{d-1}\) is separable. Second, we show that the bound is tight; see Supplemental Material,
Sec.~E. Third, we use the fact that \(R\) is a complete monotone under probabilistic unital operations (Supplemental Material, Sec.~D), meaning that \(\rho\) can be transformed into \(\sigma\) with nonzero probability if and only if \(R(\rho)\ge R(\sigma)\) \cite{PhysRevLett.128.110505,Regula2022tightconstraints}. These three ingredients together establish the theorem. 

\emph{A novel sufficient condition for separability.---}
A key ingredient in the proof of Theorem~\ref{main} is the following sufficient condition for separability: a bipartite state \(\rho^{AB}\) is separable whenever Eq. (\ref{theorem}) is satisfied. The proof of this statement can be found in Sec.~B of the Supplemental Material and it relies on an important structural property of bipartite entanglement witnesses. Recall that an \emph{entanglement witness} is a Hermitian operator \(W\) such that $\Tr\!\left[\rho^{AB}W\right]\ge 0$ for all separable states \(\rho^{AB}\). This property is also known as block-positivity. At the same time, there exists at least one entangled state \(\rho^{AB}_{\mathrm{ent}}\) such that $\Tr\!\left[\rho^{AB}_{\mathrm{ent}}W\right]<0$, in which case \(W\) is said to \emph{detect} \(\rho^{AB}_{\mathrm{ent}}\). Several spectral properties of entanglement witnesses are known \cite{song2025,Sarbicki_2008,Champagne_2022}. In Sec.~A of the Supplemental Material, we show that any bipartite witness with trace 1, satisfies the following condition
\begin{equation}
  \|W\|_1 \le d:=\min\{d_A,d_B\},  
\end{equation}
where \(\|\cdot\|_1\) denotes the trace norm. This resolves an open question posed in \cite{song2025}. The criterion in Eq.~\eqref{theorem} depends only on the eigenvalue ratio \(R\) and the smaller local dimension. Since it is based solely on the spectrum of the state, it constitutes a spectral sufficient condition for separability. This is especially appealing from a practical standpoint, as spectral information is often more accessible than full state tomography \cite{PhysRevA.64.052311,CIESLINSKI20241,7956181}. To the best of our knowledge, all previously known spectral separability criteria depend on the global dimension and become trivial in the limit of large global dimension 
\cite{PhysRevA.66.062311,PhysRevA.75.062330,Abellanet_Vidal_2025,Lewenstein_2016}. In contrast, our new criterion depends only on the smaller local dimension of the bipartite system. This has important implications for certifying separability from spectral data: in particular, for systems with unequal local dimensions, it detects separable states beyond the reach of previously known spectral criteria.

To illustrate this, consider the largest separable ball criterion of Ref.~\cite{PhysRevA.66.062311}, which guarantees that a state is separable whenever its purity satisfies
\begin{equation}\label{ball}
    \Tr(\rho^2)\le \frac{1}{d_A d_B-1}.
\end{equation}
Now, if one increases one local dimension while keeping the other fixed, the corresponding separable ball shrinks and in the limit where one subsystem dimension becomes arbitrarily large, it collapses to the maximally mixed state. This behavior stands in sharp contrast to our criterion, which remains invariant under an enlargement of one local dimension because it depends only on the smaller local dimension. 

Indeed, in Sec.~G of the Supplemental Material, we construct, for every pair of unequal local dimensions, an explicit absolutely separable state that is not detected either by the purity-ball criterion or by the criterion of Ref.~\cite{Abellanet_Vidal_2025}, but is detected by our condition in Eq.~\eqref{theorem}. By contrast, when the local dimensions are equal, the set of completely absolutely separable states is contained within the purity ball. A proof of this fact is given in Sec.~H of the Supplemental Material.

Note that, since unitary transformations preserve purity, every state satisfying Eq.~(\ref{ball}) is absolutely separable. The converse is false: there exist absolutely separable states that violate Eq.~(\ref{ball}) (see \cite{PhysRevA.66.062311}). We show in Sec.~I of the Supplemental Material that any absolutely separable state must satisfy
\begin{equation}
    \Tr(\rho^2)\leq \frac{2}{d_A d_B}.
\end{equation}
This yields an analytical upper bound on the maximal purity of absolutely separable states and improves upon the previously known analytical bounds of Ref.~\cite{Filippov_2017}.

\emph{Application to multipartite states.---}
Let \(\rho^{A_1,\ldots,A_N}\) be an \(N\)-partite state on subsystems
\(A_1,\ldots,A_N\). Suppose that, for some \(l\in\mathbb{N}\),
\begin{equation}\label{multipartite}
    R(\rho^{A_1,\ldots,A_N})
    := \frac{\lambda_{\max}}{\lambda_{\min}}
    \leq \frac{l+1}{l-1},
\end{equation}
where \(\lambda_{\max}\) and \(\lambda_{\min}\) denote the largest and smallest
eigenvalues of \(\rho^{A_1,\ldots,A_N}\), respectively. Then
\(\rho^{A_1,\ldots,A_N}\) is separable across every bipartition
\[
A_1,\ldots,A_x \,\big|\, A_{x+1},\ldots,A_N
\]
for which at least one side has Hilbert-space dimension at most \(l\). As a simple application, consider the Gibbs state
\[
    \rho_T = \frac{e^{-H/(k_B T)}}{\operatorname{Tr}(e^{-H/(k_B T)})}
\]
of a multipartite Hamiltonian \(H\) with eigenvalues $\{E_i\}_i$. In this
setting, one may ask for temperatures above which entanglement cannot persist.
By Eq.~\eqref{multipartite}, the Gibbs state is separable across any bipartition whose smaller subsystem has dimension at most ($l$), provided that
\begin{equation} T \geq \frac{2\|H\|_\infty} {k_B\,\ln\!\left(\frac{l+1}{l-1}\right)} . \end{equation}
Here, \(\|H\|_\infty:=\max|E_i| \) denotes the Schatten $\infty$-norm of \(H\). We used the bound
\(E_{\max}-E_{\min}\leq 2\|H\|_\infty\).

\emph{Relation between absolute separability and complete absolute separability.---} Recall that a bipartite state \(\rho^{AB}\) is called \emph{absolutely separable} iff $U\rho^{AB}U^\dagger$ is separable for every unitary \(U\) acting on \(AB\). Closely related is the notion of an \emph{absolutely PPT} (APPT) state, namely a state whose partial transpose remains positive under all global unitaries. A longstanding open problem asks whether the sets of absolutely separable and absolutely PPT states coincide for arbitrary local dimensions \cite{KnillSeparabilitySpectrum,arunachalam2015absoluteseparabilitydeterminedpartial}. This question has been answered affirmatively when one of the local dimensions is equal to two \cite{PhysRevA.88.062330} but remains open otherwise.

An important structural property of the set of absolutely separable states (which includes the set of completely absolutely separable states) is that it is not closed under tensor products. Every non maximally mixed bipartite state \(\rho^{AB}\) fails to remain absolutely separable under sufficiently many tensor powers. Precisely, there exists a finite \(n\in\mathbb{N}\) such that \((\rho^{AB})^{\otimes n}\) can be  entangled  by a global unitary and is therefore not absolutely separable. In Sec.~J of the Supplemental Material, we show that it is enough to take
 \begin{equation}
        n >
        \frac{
        \ln\!\left(R(\rho_{AB})+2\sqrt{R(\rho_{AB})}\right)
        }{
        \ln R(\rho_{AB})
        } .
    \end{equation}
with \(R(\rho^{AB})\) the ratio of the largest to the smallest eigenvalue of \(\rho^{AB}\). In fact, there even exist absolutely separable states, which are not closed even under attaching maximally mixed states. To illustrate this, let \(D=d_Ad_B\) denote the total dimension, and recall that any absolutely separable state must have rank at least \(D-1\) \cite{arunachalam2015absoluteseparabilitydeterminedpartial} (see Fig. \ref{fig:fig_1}). Consider an absolutely separable state \(\rho\) whose spectrum consists of \(D-1\) equal nonzero eigenvalues and one zero eigenvalue; such states are known to be absolutely separable \cite{PhysRevA.66.062311}. For any ancillary maximally mixed state of dimension \(\tilde D\), we have
\begin{equation}
    \operatorname{rank}\!\left(\rho\otimes\frac{\mathds{1}}{\tilde D}\right)
=
\operatorname{rank}(\rho)\cdot\tilde D
=
(D-1)\cdot\tilde D<D\tilde D -1.
\end{equation}
Therefore, \(\rho\otimes \frac{\mathds{1}}{\tilde D}\) cannot be absolutely separable, regardless of the bipartition. This naturally leads to the following question: can one characterize the set of bipartite states \(\rho^{AB}\) such that $\rho^{AB}\otimes \frac{\mathds{1}_{B'}}{d_{B'}}$ remains absolutely separable for every \(d_{B'}\in\mathbb{N}\)? In the following theorem, we show that this set coincides exactly with the set of completely absolutely separable states, when $d_A\leq d_B$.

\begin{theorem}
\label{prop_1}
Let \(\rho^{AB}\) be a state acting on \(\mathcal{H}_A\otimes\mathcal{H}_B\). Then the following statements are equivalent:
\begin{enumerate}[label=(\roman*)]
    \item \(\rho^{AB}\otimes \frac{\mathds{1}_{B'}}{d_{B'}}\) is absolutely separable for all \(d_{B'}\in\mathbb{N}\).
    \item \(\rho^{AB}\otimes \frac{\mathds{1}_{B'}}{d_{B'}}\) is absolutely PPT for all \(d_{B'}\in\mathbb{N}\).
    \item \(R(\rho^{AB})\le \frac{d_A+1}{d_A-1}\).
\end{enumerate}
\end{theorem}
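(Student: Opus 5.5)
Our plan is to prove the cycle (iii)$\Rightarrow$(i)$\Rightarrow$(ii)$\Rightarrow$(iii). The implication (i)$\Rightarrow$(ii) is immediate, since every separable state is PPT. Hence every absolutely separable state is absolutely PPT, for each fixed $d_{B'}$.

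For (iii)$\Rightarrow$(i), fix $d_{B'}$ and set $\sigma=\rho^{AB}\otimes \mathds{1}_{B'}/d_{B'}$, viewed as a bipartite state on $A|BB'$. The spectrum of $\sigma$ is that of $\rho^{AB}$ rescaled by $1/d_{B'}$, with each eigenvalue repeated $d_{B'}$ times. Therefore $R(\sigma)=R(\rho^{AB})$. The smaller local dimension of $\sigma$ is $\min\{d_A,d_Bd_{B'}\}$. We implicitly use $d_A\le d_B$, as in the text preceding the theorem; if instead $d_B<d_A$, we take $d_{B'}$ large enough that $d_Bd_{B'}\ge d_A$. In that case the smaller dimension equals $d_A$.

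Then $R(U\sigma U^\dagger)=R(\sigma)\le \frac{d_A+1}{d_A-1}$ for every global unitary $U$. The sufficient separability condition (Sec.~B of the Supplemental Material, stated before Theorem~\ref{main}) then gives that $U\sigma U^\dagger$ is separable, so $\sigma$ is absolutely separable. For small $d_{B'}$ with $d_Bd_{B'}<d_A$, the minimal dimension is smaller and $\frac{d+1}{d-1}$ is larger, so the condition holds a fortiori.

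The main work is (ii)$\Rightarrow$(iii), which we prove by contraposition. Suppose $R(\rho^{AB})>\frac{d_A+1}{d_A-1}$. We want a $d_{B'}$ and a unitary $U$ such that $(U\sigma U^\dagger)^{\Gamma}$ has a negative eigenvalue. We take $d_{B'}$ large enough that $d_Bd_{B'}\ge d_A$. Then the space $\mathcal H_B\otimes\mathcal H_{B'}$ contains a $d_A$-dimensional subspace, and we can form the maximally entangled vector $\ket{\Phi^+}\in\mathcal H_A\otimes\mathcal H_{BB'}$ of Schmidt rank $d_A$.

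We rotate $\sigma$ so that its largest eigenvalue $\lambda_{\max}/d_{B'}$ sits on $\ket{\Phi^+}$ and all other eigenvalues are at least $\lambda_{\min}/d_{B'}$. Next we move mass in $\sigma$ so that the eigenvector $\ket{\Phi^+}$ carries eigenvalue $\lambda_{\max}/d_{B'}$, while the orthogonal complement carries the smallest eigenvalues. The cleanest route uses the tightness construction of Sec.~E, which is built from $\mathds{1}-\ketbra{\Phi^+}{\Phi^+}^{\Gamma}$. We test against the witness $W=\ketbra{\Phi^+}{\Phi^+}^{\Gamma}$ restricted to the $d_A\times d_A$ block. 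This $W$ is the swap operator divided by $d_A$, with eigenvalues $\pm 1/d_A$ on the symmetric and antisymmetric subspaces.

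We place the large eigenvalues on the antisymmetric subspace of the embedded $d_A\otimes d_A$ block and the small ones on the symmetric subspace. Using $\sigma$'s abundant repeated eigenvalues (the multiplicity $d_{B'}$), we can make $\mathrm{Tr}[U\sigma U^\dagger W]$ proportional to $\lambda_{\max}\frac{d_A(d_A-1)}{2}-\lambda_{\min}\frac{d_A(d_A+1)}{2}$. This quantity is negative exactly when $R>\frac{d_A+1}{d_A-1}$. A negative expectation of $\ketbra{\Phi^+}{\Phi^+}^{\Gamma}$ means that the partial transpose of $U\sigma U^\dagger$ has a negative eigenvalue, so $\sigma$ is not absolutely PPT.

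The obstacle is guaranteeing enough copies of $\lambda_{\max}$ and $\lambda_{\min}$. We need $d_A(d_A-1)/2$ copies of the largest eigenvalue and $d_A(d_A+1)/2$ copies of the smallest. The multiplicity $d_{B'}$ provides exactly this once $d_{B'}\ge d_A(d_A+1)/2$. We will therefore choose $d_{B'}$ that large, and also ensure that $d_Bd_{B'}$ is large enough to embed the block.
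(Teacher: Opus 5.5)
Your proof is correct and follows essentially the same route as the paper. The implication (i)$\Rightarrow$(ii) is immediate, and (iii)$\Rightarrow$(i) uses $R(\rho\otimes\mathds{1}/d_{B'})=R(\rho)$ together with the eigenvalue-ratio separability lemma. For (ii)$\Rightarrow$(iii) you take $d_{B'}\ge d_A(d_A+1)/2$ and test a rotated $\rho\otimes\mathds{1}/d_{B'}$ against the partial transpose of a Schmidt-rank-$d_A$ maximally entangled vector; the only difference is that you exhibit one explicit unitary rather than invoking the rearrangement inequality for general Schmidt coefficients. Just tidy two slips. First, $\ket{\Phi^+}$ lies in the symmetric subspace, so $\lambda_{\max}$ should not be placed on it; your later antisymmetric placement is the right one. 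Second, the trace equals $-\frac{1}{d_Ad_{B'}}\bigl(\lambda_{\max}\frac{d_A(d_A-1)}{2}-\lambda_{\min}\frac{d_A(d_A+1)}{2}\bigr)$, so it is the trace that is negative, and your displayed expression that is positive, exactly when $R>\frac{d_A+1}{d_A-1}$.
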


The proof is given in Sec.~F of the Supplemental Material. Note that, when \(d_A\le d_B\), this class is precisely the set of completely absolutely separable states (see Theorem (\ref{main})). This motivates the name $\it{completely}$ absolutely separable in analogy to $\it{completely}$ positive maps.

\emph{Conclusion and final remarks.---}
We have investigated the most fundamental limitations on entanglement generation from purity by going beyond deterministic unital dynamics and considering the full class of probabilistic unital channels. In this more general setting, we showed that absolute separability is not the ultimate obstruction to entanglement extraction: certain absolutely separable states can still generate entanglement with nonzero probability. Motivated by this, we introduced the notion of \emph{complete absolute separability} (CAS), capturing precisely those states from which entanglement cannot be generated with any nonzero probability by any probabilistic purity-non-generating process.

Our main result is a complete characterization of CAS states in arbitrary bipartite dimensions. Remarkably, this characterization is purely spectral and depends only on the ratio between the largest and smallest eigenvalues and on the smaller local dimension. Along the way, we obtained a new sufficient condition for separability, also expressed only in terms of the eigenvalue ratio and the smallest local dimension. This criterion is independent of previously known spectral separability tests and is especially powerful for systems with unequal local dimensions, where it detects absolutely separable states beyond the reach of any earlier sufficient conditions. We further showed that the same spectral threshold has multipartite implications, yielding a simple guarantee of separability across broad classes of bipartitions.

Finally, we established a structural connection between CAS and the longstanding separability-from-spectrum problem: CAS states are exactly those that remain absolutely separable after adjoining an arbitrary maximally mixed ancilla locally. We hope that these results help clarify the operational role of purity as a resource for entanglement generation, and that the perspective developed here will prove useful for further progress on absolute separability and multipartite entanglement structure.

\emph{Acknowledgments.---} We thank Anna Sanpera for introducing us to problem of absolute separability. We also thank Eric Chitambar, Michael Epping, Ludovico Lami, Peter Ken Schuhmacher, and Nikolai Wyderka for helpful discussions.

This work has been supported by the German Ministry of Education and Research (Projects QuKuK, BMBF Grant No.~16KIS1618K and QSolid, BMFTR Grant No.~ 13N16163). P.B thanks the DLR Quantum Computing Initiative and the Federal Ministry for Economic Affairs and Climate Action. We also acknowledge financial support by Deutsche Forschungsgemeinschaft
(DFG, German Research Foundation) under Germany’s
Excellence Strategy – Cluster of Excellence Matter and
Light for Quantum Computing (ML4Q) EXC 2004/1 –
390534769.
\bibliography{literature}
\clearpage

\section*{Supplemental Material}

\subsection{Upper bound on the 1-norm of entanglement witnesses}

\begin{lemma}\label{witness_proof}
Let $W$ be a hermitian block-positive operator on
$\mathcal H_{A}\otimes\mathcal H_{B}$ with $\operatorname{Tr} W = 1$.
Then
\begin{equation}
\|W\|_{1}\leq\ d.
\end{equation}
where, $d=\min\{d_A,d_B\}$.
\end{lemma}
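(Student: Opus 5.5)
Throughout let $d=d_A\le d_B$ and write $W=W_+-W_-$ with $W_\pm\ge 0$ and orthogonal supports. Since $\operatorname{Tr}W=1$, we have $\|W\|_1=\operatorname{Tr}W_++\operatorname{Tr}W_-=1+2\operatorname{Tr}W_-$. So the lemma is equivalent to
\begin{equation*}
\operatorname{Tr}(WP)\ \ge\ -\frac{d-1}{2}\operatorname{Tr}W ,
\end{equation*}
where $P$ is the projector onto the negative eigenspace of $W$. Rearranged, this reads $\operatorname{Tr}\bigl[W\bigl((d-1)\mathds{1}+2P\bigr)\bigr]\ge 0$.

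\textbf{Reduction to a separability statement.} Block-positivity says exactly that $\operatorname{Tr}(WS)\ge 0$ for every separable (unnormalised) operator $S$. It therefore suffices to prove the following dimension-only statement: for every orthogonal projector $P$ on $\mathcal H_A\otimes\mathcal H_B$, the operator
\begin{equation*}
S_P=(d-1)\mathds{1}+2P
\end{equation*}
is separable. Equivalently, with $Y=\mathds{1}-2P$, the claim is that $d\mathds{1}-Y$ is separable for every Hermitian contraction $Y$. Since $\|W\|_1=\max_{\|Y\|_\infty\le 1}\operatorname{Tr}(WY)$, this is a dual reformulation of the lemma, and it loses nothing.

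\textbf{Sanity check of tightness.} Take $d_A=d_B=d$ and $P=P_{\rm a}$, the antisymmetric projector. Then $S_P=(d-1)P_{\rm s}+(d+1)P_{\rm a}$ is a Werner operator with $\operatorname{Tr}(S_P F)=0$, where $F$ is the swap, so it sits exactly on the separable boundary. The constant $d$ is therefore sharp, and the witness saturating it is the swap $F/d$.

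\textbf{Proving $S_P$ separable, the main obstacle.} I would first try to handle each rank-one piece separately. Write $P=\sum_k\ketbra{\psi_k}{\psi_k}$. Each $\psi_k$ has Schmidt rank at most $d$, so $\ketbra{\psi_k}{\psi_k}$ lives on a $d\times d$ block $\mathcal H_{A}\otimes\mathcal H_{B_k}$ with $\mathcal H_{B_k}$ a $d$-dimensional subspace. On that block, the isotropic criterion shows $\mathds{1}_{d\times d}+c\,\ketbra{\psi}{\psi}$ is separable iff $c\le d$ for maximally entangled $\psi$, and more easily for less entangled $\psi$ after local filtering.

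The difficulty is that the identity must be shared among all the $\psi_k$, and $P$ may have rank up to $d_Ad_B$, so summing pieces naively fails. My first attempt is therefore a direct argument on the dual side: show that $\operatorname{Tr}(WY)\le d\operatorname{Tr}W$ for every contraction $Y$. Concretely:
\begin{itemize}
\item Represent $W$ as the Choi operator of a positive map $\Phi:M_{d_A}\to M_{d_B}$.
\item Expand $Y$ in an operator-Schmidt form, or equivalently write $Y=\sum_{ij}\ketbra{i}{j}\otimes Y_{ij}$ with a block structure over $A$.
\item Use positivity of $\Phi$ applied to rank-one inputs, together with the bound $\|\sum_{ij}\Phi(\ketbra{i}{j})\otimes Y_{ij}\|$. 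The factor $d=d_A$ then appears from the $d_A$ diagonal blocks through a Cauchy--Schwarz step, $|\sum_i x_i|\le\sqrt{d}\,\|x\|$ applied twice.
\end{itemize}
If this estimate does not close, the fallback is to symmetrise. Any counterexample $P$ can be averaged over local unitaries commuting with a fixed Schmidt basis. This reduces the problem to operators diagonal in a product-basis-plus-phases structure, where separability of $S_P$ should follow from a Werner/isotropic-type computation like the tightness check above.
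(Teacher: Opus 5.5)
\textbf{Gap: the reduction is correct, but the central claim is never proved.} By cone duality, the lemma is indeed equivalent to separability of $d\mathds{1}-Y$ for every Hermitian contraction $Y$, so your reduction loses nothing. Your swap/Werner check correctly shows that $d$ is sharp. However, up to normalisation this reformulation is exactly the paper's eigenvalue-ratio criterion (Lemma~2: $\lambda_{\max}/\lambda_{\min}\le (d+1)/(d-1)$ implies separability). The paper obtains that criterion \emph{from} the present lemma, so your reformulation only relocates the difficulty, and the proposal never resolves it.

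The remaining steps do not close it either:
\begin{itemize}
\item You rightly abandon the rank-one splitting.
\item The ``direct argument on the dual side'' is just the original statement again.
\item The bullet sketch never states the inequality that does the work. The norm of $\sum_{ij}\Phi(\ketbra{i}{j})\otimes Y_{ij}$ is also not the relevant quantity: for $W=\sum_{ij}\ketbra{i}{j}\otimes\widetilde W_{ij}$ one has $\operatorname{Tr}(WY)=\sum_{ij}\operatorname{Tr}(\widetilde W_{ij}Y_{ji})$.
\item The symmetrisation fallback is not a valid reduction. Averaging over local unitaries is a local twirl, which can map an entangled $S_P$ to a separable one (dually, it can only decrease $\|W\|_1$). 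So a counterexample need not survive the averaging.
\end{itemize}

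\textbf{The missing idea is a per-block estimate from $2\times2$ minors.} Take the blocks over a basis of the smaller factor $\mathcal H_A$ and set $t_i=\operatorname{Tr}\widetilde W_{ii}\ge 0$. Fix $i\neq j$ and a unit vector $\ket{\psi}\in\mathcal H_B$. Block-positivity on $(\alpha\ket{i}+\beta\ket{j})\otimes\ket{\psi}$ says that the matrix with diagonal entries $a=\langle\psi|\widetilde W_{ii}|\psi\rangle$, $f=\langle\psi|\widetilde W_{jj}|\psi\rangle$ and off-diagonal entry $c=\langle\psi|\widetilde W_{ij}|\psi\rangle$ is positive semidefinite. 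Hence $|c|\le\sqrt{af}$. This is exactly your ``positivity of $\Phi$ on rank-one inputs'', restricted to $\mathrm{span}\{\ket{i},\ket{j}\}$, but you never extract this inequality.

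Next write $\|\widetilde W_{ij}\|_1=\sup_U|\operatorname{Tr}(U\widetilde W_{ij})|$ and evaluate the trace in an eigenbasis $\{\ket{\psi_k}\}$ of $U$. This gives $\|\widetilde W_{ij}\|_1\le\sum_k\sqrt{a_kf_k}\le\sqrt{t_it_j}\le\tfrac12(t_i+t_j)$. Then
\[
\operatorname{Tr}(WY)\le\sum_{ij}\|\widetilde W_{ij}\|_1\,\|Y_{ji}\|_\infty\le\Bigl(\sum_i\sqrt{t_i}\Bigr)^2\le d_A\sum_i t_i=d_A ,
\]
and this is where your Cauchy--Schwarz step belongs. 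The paper runs the same argument with the AM--GM bound $\tfrac12(t_i+t_j)$ and the triangle inequality $\|W\|_1\le\sum_{ij}\|\widetilde W_{ij}\|_1$, which also sums to $d_A$. Without this minor-based link between off-diagonal blocks and diagonal traces, your plan contains no step that uses block-positivity quantitatively, so as written it does not prove the lemma.
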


\begin{proof}
Without loss of generality, let us assume $d_A\leq d_B$. Rewrite $W$ via its block decomposition by choosing a basis $\{\ket{i}\}_i$ of $\mathcal{H}_A$ 
\begin{equation}
W=\sum_{i,j=1}^{d_A}\ket{i}\bra{j}\otimes \widetilde{W}_{ij}, \qquad t_i:=\Tr \widetilde{W}_{ii}\geq 0, \qquad \sum_{i=1}^{d_A} t_i=1,
\end{equation}
Note that, the condition $\widetilde{W}_{ii}\geq 0$ follows from block-positivity of $W$. Fix $i\neq j$ and a unit vector $\ket{\psi}\in\mathcal{H}_B$. Block positivity gives
\begin{equation}
\begin{aligned}
0&\leq \Bigl((\alpha^{*}\bra{i}+\beta^{*}\bra{j})\otimes\bra{\psi}\Bigr) W \Bigl((\alpha \ket{i}+\beta\ket{j})\otimes\ket{\psi}\Bigr)\\&=|\alpha|^{2} a
+|\beta|^{2} f
+\alpha^{*}\beta\, c
+\beta^{*}\alpha\, c^{*}\\
&=(\alpha^*\,\, \beta^*)\begin{pmatrix} a &c\\ c^* &f\end{pmatrix}\begin{pmatrix}
\alpha\\ \beta
\end{pmatrix} \,\,\forall \,\,\alpha,\beta \in \mathbb{C},
\end{aligned}
\end{equation}
where $a=\braket{\psi|\widetilde{W}_{ii}|\psi}$, $f=\braket{\psi|\widetilde{W}_{jj}|\psi}$, $c=\braket{\psi|\widetilde{W}_{ij}|\psi}$. Note that, since \(W\) is Hermitian, we have
\[
\widetilde{W}_{ij}=\widetilde{W}_{ji}^{\dagger}.
\]
Hence, if \(\braket{\psi|\widetilde{W}_{ij}|\psi}=c\), then
\[
\braket{\psi|\widetilde{W}_{ji}|\psi}
=
\braket{\psi|\widetilde{W}_{ij}^{\dagger}|\psi}
=
\braket{\psi|\widetilde{W}_{ij}|\psi}^{*}
=
c^{*}.
\] Hence the $2\times2$ matrix
$M=\bigl(\!\begin{smallmatrix}a & c\\ c^{*} & f\end{smallmatrix}\!\bigr)$
is positive-semidefinite. Because $M$ is Hermitian, Sylvester's criterion says that $M$ is positive-semidefinite \emph{iff} its principal minors are non-negative, therefore
\begin{equation}
 \det M=af-|c|^2\geq 0.
\end{equation}
 Hence
\begin{equation}
|c|^2\leq af\implies |c|\leq \sqrt{af}.
\end{equation}
The arithmetic geometric mean inequality gives $\sqrt{af}\leq\frac 12 (a+f)$, so altogether
\begin{equation}
\label{star}
|c|\leq\frac 12 (a+f).
\end{equation}
We can write the 1-norm of any linear operator $X$ as $\|X\|_1=\sup_{U}|\Tr (UX)|$ with $U$ a unitary operator. Choose $X$ to be $\widetilde{W}_{ij}$ and choose an orthonormal basis $\{\ket{\psi_k}\}_k$ of $\mathcal{H}_B$. Without loss of generality we can choose $\{\ket{\psi_k}\}_k$ to be the eigenbasis of $U$. Then,
\begin{equation}
\begin{aligned}
|\Tr(U\widetilde{W}_{ij})|&=\Bigl|\sum_k e^{i\theta_k}\braket{\psi_k|\widetilde{W}_{ij}|\psi_k}\Bigr|\leq \sum_k |\braket{\psi_k|\widetilde{W}_{ij}|\psi_k}|\\
&\stackrel{\eqref{star}}{\le}\frac 12 \sum_k \Bigl(\braket{\psi_k|\widetilde{W}_{ii}|\psi_k}+\braket{\psi_k|\widetilde{W}_{jj}|\psi_k} \Bigr)= \frac 12 (t_i+t_j),
\end{aligned}
\end{equation}
with $\theta_k\in [0,2\pi]$ and $t_j=\Tr \widetilde{W}_{jj}$. We used the fact that $U$ is unitary and thus, $U\ket{\psi_k}=e^{i\theta_k}\ket{\psi_k}$. Taking the supremum over $U$ yields
\begin{equation}
\label{star2}
||\widetilde{W}_{ij}||_1\leq \frac12 (t_i+t_j) \qquad \forall i,j.
\end{equation}
Now using that $||\ket{i}\bra{j}\otimes \widetilde{W}_{ij}||_1=||\widetilde{W}_{ij}||_1$ and the triangular inequality
\begin{equation}
\begin{aligned}
||W||_1&\leq \sum_{i,j}^{d_A}||\ket{i}\bra{j}\otimes \widetilde{W}_{ij}||_1=\sum_{i,j}^{d_A}||\widetilde{W}_{ij}||_1\\
&\stackrel{\eqref{star2}}{\leq}\sum_{i,j}^{d_A}\frac12 (t_i+t_j)=d_A.
\end{aligned}
\end{equation}
The last inequality follows from $\sum_{i=1}^{d_A} t_i=1$. This completes the proof.
\end{proof}

\subsection{Proof of the sufficient condition for separability} 

\begin{lemma}
\label{spec-sep-lemma}
A bipartite state $\rho$ is separable if 
\begin{equation}\label{theorem}
    R(\rho) \le \frac{d+1}{d-1},
\end{equation}
where $d=\min\{d_A,d_B\}$.
\end{lemma}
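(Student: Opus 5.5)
We argue by contradiction via entanglement witnesses, using Lemma~\ref{witness_proof}. Suppose $\rho$ satisfies $R(\rho)\le \frac{d+1}{d-1}$ but is entangled. Since the separable set is closed and convex, there is a Hermitian block-positive $W$ with $\Tr[\rho W]<0$. Such a $W$ cannot be positive semidefinite, and $\Tr W>0$: block-positivity gives $\Tr W = d_Ad_B\Tr[W\,\mathds{1}/(d_Ad_B)]\ge 0$, and $\Tr W=0$ would force the diagonal blocks $\widetilde W_{ii}$, which are positive and traceless, to vanish, so the $2\times 2$ minor argument in the proof of Lemma~\ref{witness_proof} would give $W=0$. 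We may therefore normalize $\Tr W=1$ and apply Lemma~\ref{witness_proof}: $\|W\|_1\le d$.

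Next, split $W=W_+-W_-$ into positive and negative parts, with $\Tr W_+=p$ and $\Tr W_-=n$. Then $p-n=1$ and $p+n=\|W\|_1\le d$, so
\begin{equation}
n\le \frac{d-1}{2},\qquad p=1+n .
\end{equation}
Let $\lambda_{\max},\lambda_{\min}$ be the extreme eigenvalues of $\rho$. We use $\Tr[\rho W_+]\ge \lambda_{\min}\,p$ and $\Tr[\rho W_-]\le \lambda_{\max}\,n$ to obtain
\begin{equation}
\Tr[\rho W]\ge \lambda_{\min}(1+n)-\lambda_{\max}n .
\end{equation}
If $\lambda_{\min}=0$, the hypothesis forces the convention $R=\infty$, so the statement presupposes $\rho$ full rank and we may take $\lambda_{\min}>0$. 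The right-hand side is then nonnegative iff $R\le \frac{1+n}{n}=1+\frac1n$. Since $n\le \frac{d-1}{2}$, we have $1+\frac1n\ge 1+\frac{2}{d-1}=\frac{d+1}{d-1}\ge R$, so $\Tr[\rho W]\ge 0$. This contradicts the choice of $W$, hence $\rho$ is separable. For $d=1$ the claim is trivial, since every state on $\mathbb{C}\otimes\mathcal H_B$ is separable.

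All the substance lies in Lemma~\ref{witness_proof}, which we assume. The remaining points to handle carefully are the strict positivity of $\Tr W$ used for the normalization, and the edge case $n=0$. In that case $W\ge0$ and $\Tr[\rho W]\ge 0$ holds trivially, so $n=0$ cannot occur. The rest is the elementary eigenvalue sandwiching above.
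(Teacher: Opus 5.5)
Your proposal is correct and takes essentially the paper's route: normalize the witness to $\Tr W=1$, use Lemma~\ref{witness_proof} to get $\Tr W_-\le \frac{d-1}{2}$, and sandwich $\rho$ between $\lambda_{\min}\mathds{1}$ and $\lambda_{\max}\mathds{1}$. Your lower bound $\lambda_{\min}(1+n)-\lambda_{\max}n$ is exactly what the paper's computation with $\Delta=\mathds{1}-\rho/\lambda_{\max}$ gives, and your argument that $\Tr W>0$ (so the normalization is legitimate) fills a step the paper leaves implicit.
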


\begin{proof}
By assumption,
\[
    R(\rho)\le \frac{d+1}{d-1},
\]
so \(\lambda_{\min} (\rho)>0\). Define
\[
    \Delta:=I-\frac{\rho}{\lambda_{\max}(\rho)} .
\]
Then \(\Delta\ge 0\), and
\[
    \|\Delta\|_{\infty}
    =
    1-\frac{\lambda_{\min}(\rho)}{\lambda_{\max}(\rho)}
    =
    1-\frac1{R(\rho)}
    \le
    1-\frac{d-1}{d+1}
    =
    \frac{2}{d+1}.
\]

Let \(W\) be any entanglement witness normalized by \(\Tr W=1\). Write
\[
    W=W_+-W_-,
    \qquad
    W_+,W_-\ge 0,
    \qquad
    W_+W_-=0.
\]
Set
\[
    a^+:=\Tr W_+,
    \qquad
    a^-:=\Tr W_-.
\]
Since \(\Tr W=1\),
\[
    a^+-a^-=1.
\]
Moreover, by Lemma \ref{witness_proof},
\[
    \|W\|_1=a^++a^-\le d.
\]
Hence
\[
    a^+\le \frac{d+1}{2}.
\]

Now
\[
    \Tr(\rho W)
    =
    \lambda_{\max}\Tr((I-\Delta)W)
    =
    \lambda_{\max}\bigl(1-\Tr(\Delta W)\bigr).
\]
Since \(\Delta\ge 0\),
\[
    \Tr(\Delta W)
    =
    \Tr(\Delta W_+)-\Tr(\Delta W_-)
    \le
    \Tr(\Delta W_+).
\]
Using \(\Delta\le \|\Delta\|_{\infty}I\), we get
\[
    \Tr(\Delta W_+)
    \le
    \|\Delta\|_{\infty}\Tr W_+
    \le
    \frac{2}{d+1}\cdot \frac{d+1}{2}
    =
    1.
\]
Therefore
\[
    \Tr(\rho W)
    =
    \lambda_{\max}\bigl(1-\Tr(\Delta W)\bigr)
    \ge 0.
\]

Thus \(\Tr(\rho W)\ge 0\) for every entanglement witness \(W\) with
\(\Tr W=1\). Since every entangled state is detected by some entanglement
witness, \(\rho\) is separable.
\end{proof}

\subsection{Example of an absolutely but not completely absolutely separable state}

Consider the two-qubit state
\[
    \rho=\frac13\Bigl(
        \ketbra{00}{00}
        +\ketbra{01}{01}
        +\ketbra{10}{10}
    \Bigr)
    =
    \operatorname{diag}\left(\frac13,\frac13,\frac13,0\right),
\]
which is absolutely separable in \(2\otimes 2\). Let
\[
    \rho_W
    =
    \frac12\ketbra{\Psi^-}{\Psi^-}
    +
    \frac{\mathds{1}}{8},
    \qquad
    \ket{\Psi^-}
    =
    \frac{1}{\sqrt2}(\ket{01}-\ket{10}).
\]
This Werner state is full rank and entangled. Set
\[
    \lambda=\frac52,
    \qquad
    \sigma=\lambda\frac{\mathds{1}}{4}-\rho_W .
\]
Note that, \(\sigma\ge 0\) and
\[
    \Tr\sigma=\lambda-1=\frac32.
\]
Thus the normalized state associated with \(\sigma\) is
\[
    \widehat{\sigma}:=\frac{\sigma}{\Tr\sigma}
    =
    \frac23\sigma .
\]

Now define the sub-POVM effects
\[
    E_\sigma
    =
    \ketbra{11}{11}
    =
    \operatorname{diag}(0,0,0,1),
\]
and
\[
    E_W
    =
    \frac29
    \Bigl(
        \ketbra{00}{00}
        +
        \ketbra{01}{01}
        +
        \ketbra{10}{10}
    \Bigr)
    =
    \operatorname{diag}\left(\frac29,\frac29,\frac29,0\right).
\]
They satisfy
\[
    E_\sigma+E_W
    =
    \operatorname{diag}\left(\frac29,\frac29,\frac29,1\right)
    \le \mathds{1}.
\]
Equivalently, the missing failure effect is
\[
    E_{\mathrm{fail}}
    =
    I_4-E_\sigma-E_W
    =
    \operatorname{diag}\left(\frac79,\frac79,\frac79,0\right)
    \ge 0.
\]
Hence \(\{E_\sigma,E_W,E_{\mathrm{fail}}\}\) is a valid POVM.

Define the stochastic measure-and-prepare map
\[
    \Lambda[X]
    =
    \Tr(E_\sigma X)\,\widehat{\sigma}
    +
    \Tr(E_W X)\,\rho_W .
\]
The map is completely positive because it is measure-and-prepare, and it is
trace-non-increasing because
\[
    E_\sigma+E_W\le \mathds{1}.
\]

We now check stochastic unitality. Note that,
\[
    \Tr(E_\sigma)=1,
    \qquad
    \Tr(E_W)=\frac{2}{3}.
\]
Therefore
\[
\begin{aligned}
    \Lambda[\mathds{1}]
    &=
    \widehat{\sigma}
    +
    \frac23\,\rho_W =\frac{5}{12}\mathds{1}
\end{aligned}
\]
Hence \(\Lambda\) is stochastic unital. Finally, evaluate \(\Lambda\) on \(\rho\). Since \(\rho\) has no support on
\(\ket{11}\),
\[
    \Tr(E_\sigma\rho)=0.
\]
On the other hand,
\[
    \Tr(E_W\rho)
    =
    \frac29
    \left(
        \frac13+\frac13+\frac13
    \right)
    =
    \frac29.
\]
Thus
\[
    \Lambda[\rho]
    =
    \frac29\,\rho_W .
\]
Thus the post-selected output is
\[
    \frac{\Lambda[\rho]}{\Tr\Lambda[\rho]}=\rho_W,
\]
which is entangled, and the success probability is
\[
    \Tr\Lambda[\rho]=\frac{2}{9}.
\]

\subsection{Probabilistic unital transformations}

We now characterize state transformations under stochastic unital maps.
Throughout this section, the input and output systems have the same dimension
\(D\). A stochastic unital map is a completely positive trace-nonincreasing map
\(\Lambda\) such that
\begin{equation}
    \Lambda(\mathds{1}) = q\mathds{1}
\end{equation}
for some \(q>0\).

We first prove a simple necessary condition. Let \(\rho\) be full rank. Then
\begin{equation}
    \lambda_{\min}(\rho)\mathds{1}
    \leq \rho \leq
    \lambda_{\max}(\rho)\mathds{1}.
\end{equation}
Applying a stochastic unital map \(\Lambda\), we obtain
\begin{equation}
    q\lambda_{\min}(\rho)\mathds{1}
    \leq \Lambda(\rho) \leq
    q\lambda_{\max}(\rho)\mathds{1}.
\end{equation}
Assume that the transformation succeeds with nonzero probability and define
\begin{equation}
    \sigma =
    \frac{\Lambda(\rho)}{\Tr[\Lambda(\rho)]}.
\end{equation}
It follows that
\begin{equation}
    \lambda_{\min}(\sigma)
    \geq
    \frac{q\lambda_{\min}(\rho)}{\Tr[\Lambda(\rho)]},
    \qquad
    \lambda_{\max}(\sigma)
    \leq
    \frac{q\lambda_{\max}(\rho)}{\Tr[\Lambda(\rho)]}.
\end{equation}
Hence
\begin{equation}
    \frac{\lambda_{\max}(\sigma)}{\lambda_{\min}(\sigma)}
    \leq
    \frac{\lambda_{\max}(\rho)}{\lambda_{\min}(\rho)} .
    \label{eq:ratio-monotone}
\end{equation}
In particular, a full-rank state cannot be transformed with nonzero probability
into a non-full-rank state.

We now prove sufficiency. We show that if either \(\rho\) is singular, or both
\(\rho\) and \(\sigma\) are full rank and
\[
    \frac{\lambda_{\max}(\rho)}{\lambda_{\min}(\rho)}
    \ge
    \frac{\lambda_{\max}(\sigma)}{\lambda_{\min}(\sigma)},
\]
then there exists a stochastic unital map transforming \(\rho\) into
\(\sigma\) with nonzero probability.

If \(\sigma=\mathds{1}/D\), the claim is immediate, since the channel
\[
    \mathcal E(X)=\Tr(X)\frac{\mathds{1}}{D}
\]
is unital and maps every state to \(\mathds{1}/D\). Hence assume
\(\sigma\neq \mathds{1}/D\).

First suppose that \(\sigma\) is full rank. Define
\[
    \alpha=D\lambda_{\max}(\sigma),
    \qquad
    \beta=\frac{1}{D\lambda_{\min}(\sigma)} .
\]
Then \(\alpha,\beta>1\), and
\[
    \alpha\beta
    =
    \frac{\lambda_{\max}(\sigma)}{\lambda_{\min}(\sigma)} .
\]
Set
\[
    \phi_1=
    \frac{\sigma-\beta^{-1}\frac{\mathds{1}}{D}}{1-\beta^{-1}},
    \qquad
    \phi_2=
    \frac{\alpha\frac{\mathds{1}}{D}-\sigma}{\alpha-1}.
\]
Since
\[
    \beta^{-1}\frac{\mathds{1}}{D}
    \le \sigma
    \le
    \alpha\frac{\mathds{1}}{D},
\]
both \(\phi_1\) and \(\phi_2\) are states. Moreover,
\[
    (1-\beta^{-1})\phi_1
    +
    (\alpha-1)\phi_2
    =
    \left(\alpha-\beta^{-1}\right)\frac{\mathds{1}}{D}.
\]

Choose unit vectors \(\ket{x},\ket{y}\) such that
\[
    \frac{\bra{x}\rho\ket{x}}
         {\bra{y}\rho\ket{y}}
    =
    \alpha\beta .
\]
This is possible by the assumed spectral-ratio condition if \(\rho\) is full
rank. If \(\rho\) is singular, choose \(\ket{x}\) as a maximal-eigenvalue
eigenvector and interpolate with a vector in \(\ker\rho\).

Now set
\[
    k=\frac{\alpha-1}{1-\beta^{-1}}>0,
    \qquad
    0<c\le \frac{1}{1+k},
\]
and define
\[
    M_1=c\ketbra{x}{x},
    \qquad
    M_2=ck\ketbra{y}{y}.
\]
Then
\[
    M_1+M_2\le c(1+k)\mathds{1}\le \mathds{1},
\]
so \(\{M_1,M_2\}\) is a valid sub-POVM. Define
\[
    \mathcal E_{\rm s}(X)
    =
    \Tr(M_1X)\phi_1+\Tr(M_2X)\phi_2 .
\]
This map is completely positive and trace nonincreasing. Furthermore,
\[
\begin{aligned}
    \mathcal E_{\rm s}(\mathds{1})
    &=
    c\phi_1+ck\phi_2  \\
    &=
    \frac{c}{1-\beta^{-1}}
    \left[
        (1-\beta^{-1})\phi_1
        +
        (\alpha-1)\phi_2
    \right]  \\
    &=
    \frac{c(\alpha-\beta^{-1})}{1-\beta^{-1}}
    \frac{\mathds{1}}{D}.
\end{aligned}
\]
Hence \(\mathcal E_{\rm s}\) is stochastic unital.

It remains to check the output on \(\rho\). Let
\[
    a=\Tr(M_1\rho),
    \qquad
    b=\Tr(M_2\rho).
\]
Then
\[
    \frac{b}{a}
    =
    k\frac{\bra{y}\rho\ket{y}}{\bra{x}\rho\ket{x}}
    =
    \frac{k}{\alpha\beta}
    =
    \frac{\alpha-1}{\alpha(\beta-1)}
    =:r.
\]
A direct substitution gives
\[
    \phi_1+r\phi_2=(1+r)\sigma .
\]
Therefore
\[
    \mathcal E_{\rm s}(\rho)
    =
    a\phi_1+b\phi_2
    =
    a(\phi_1+r\phi_2)
    =
    (a+b)\sigma .
\]
Since \(a+b>0\), the normalized post-selected output is \(\sigma\).

It remains only to handle the case where \(\rho\) is singular and \(\sigma\)
is arbitrary. Choose
\[
    \alpha>D\lambda_{\max}(\sigma)
\]
and define
\[
    \phi_1=\sigma,
    \qquad
    \phi_2=
    \frac{\alpha\frac{\mathds{1}}{D}-\sigma}{\alpha-1}.
\]
Then \(\phi_2\) is a state. Let \(\ket{x}\) be a maximal-eigenvalue eigenvector
of \(\rho\), and choose \(\ket{y}\in\ker\rho\). Set
\[
    k=\alpha-1,
    \qquad
    0<c\le \frac1{1+k},
\]
and define
\[
    M_1=c\ketbra{x}{x},
    \qquad
    M_2=ck\ketbra{y}{y}.
\]
Again \(M_1+M_2\le \mathds{1}\), so the map
\[
    \mathcal E_{\rm s}(X)
    =
    \Tr(M_1X)\phi_1+\Tr(M_2X)\phi_2
\]
is completely positive and trace nonincreasing. Moreover,
\[
    \mathcal E_{\rm s}(\mathds{1})
    =
    c\sigma+c(\alpha-1)\phi_2
    =
    c\alpha\frac{\mathds{1}}{D},
\]
so it is stochastic unital. Since \(\ket{y}\in\ker\rho\),
\[
    \mathcal E_{\rm s}(\rho)
    =
    c\lambda_{\max}(\rho)\sigma .
\]
Thus the transformation succeeds with nonzero probability and the normalized
output is \(\sigma\).

This proves the sufficiency of the stated condition.

\subsection{Proof of theorem 1}

\begin{theorem}\label{main}
A bipartite state \(\rho^{AB}\) is completely absolutely separable if and only if
\[
   R(\rho^{AB}) \le \frac{d+1}{d-1},
\]
where
\[
    R(\rho):=\frac{\lambda_{\max}(\rho)}{\lambda_{\min}(\rho)},
    \qquad
    d=\min\{d_A,d_B\}.
\]
\end{theorem}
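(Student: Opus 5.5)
The proof has two directions, assembled from Lemma~\ref{spec-sep-lemma}, the monotonicity of $R$ from Sec.~D, and an explicit tightness example.

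\emph{Sufficiency.} Assume $R(\rho^{AB})\le\frac{d+1}{d-1}$, so $\rho^{AB}$ is full rank. Let $\tilde\Lambda$ be any probabilistic unital channel that succeeds with nonzero probability, and set $\sigma=\tilde\Lambda[\rho^{AB}]/\Tr\tilde\Lambda[\rho^{AB}]$. By Eq.~\eqref{eq:ratio-monotone} of Sec.~D, $\sigma$ is full rank and
\[
R(\sigma)\le R(\rho^{AB})\le \frac{d+1}{d-1}.
\]
Since $\tilde\Lambda$ maps $AB$ to $AB$, the output has the same local dimensions, and Lemma~\ref{spec-sep-lemma} shows that $\sigma$ is separable. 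Hence $\rho^{AB}$ is completely absolutely separable.

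\emph{Necessity.} Assume $R(\rho^{AB})>\frac{d+1}{d-1}$, allowing $R=\infty$ when $\rho^{AB}$ is singular. The plan is to find a full-rank entangled target $\sigma$ with
\[
\frac{d+1}{d-1}<R(\sigma)\le R(\rho^{AB}).
\]
The sufficiency part of Sec.~D, which covers both singular $\rho$ and full-rank $\rho$ with $R(\rho)\ge R(\sigma)$, then provides a stochastic unital map producing $\sigma$ with nonzero probability, so $\rho^{AB}$ is not CAS.

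For the target, take $d_A=d\le d_B$ and embed a $d\otimes d$ maximally entangled vector $\ket{\Phi^+}$ into $\mathcal H_A\otimes\mathcal H_B$. Consider the isotropic-type family
\[
\sigma_p=(1-p)\frac{\mathds 1}{d_Ad_B}+p\,\ketbra{\Phi^+}{\Phi^+}.
\]
Its eigenvalues are $\frac{1-p}{D}+p$ and $\frac{1-p}{D}$ with $D=d_Ad_B$, so $R(\sigma_p)$ increases continuously from $1$ to $\infty$ as $p$ goes from $0$ to $1$. The overlap is $\bra{\Phi^+}\sigma_p\ket{\Phi^+}=\frac{1-p}{D}+p$. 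Entanglement is certified by the witness $\mathds 1-d\ketbra{\Phi^+}{\Phi^+}$, since every product state has overlap at most $1/d$ with $\ket{\Phi^+}$. The resulting threshold on $p$ must then be translated into a threshold on $R(\sigma_p)$.

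\textbf{Main obstacle.} For $d_A\ne d_B$ this naive family may not be entangled exactly at $R=\frac{d+1}{d-1}$, because the threshold would depend on $D$ rather than on $d$. The better choice is the state from Fig.~1,
\[
\sigma_0=\frac{\mathds 1-\ketbra{\Phi^+}{\Phi^+}^{\Gamma}}{D-1},
\]
with $\ket{\Phi^+}$ the $d\otimes d$ maximally entangled vector embedded in $AB$. Here $\ketbra{\Phi^+}{\Phi^+}^{\Gamma}=\frac1d F$, where $F$ is the swap on the $d\otimes d$ block, with eigenvalues $\pm\frac1d$. The eigenvalues of $\mathds 1-\frac1d F$ are therefore $1\pm\frac1d$ and $1$, giving
\[
R(\sigma_0)=\frac{d+1}{d-1}.
\]
This state lies on the boundary of separability: its partial transpose is proportional to $\mathds 1-\ketbra{\Phi^+}{\Phi^+}$, which is PSD but singular.

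Perturb it to
\[
\sigma_\epsilon\propto \mathds 1-(1+\epsilon)\frac1d F.
\]
For small $\epsilon>0$, $\sigma_\epsilon$ is full rank with $R(\sigma_\epsilon)=\frac{d+1+\epsilon}{d-1-\epsilon}$, which decreases continuously to $\frac{d+1}{d-1}$ as $\epsilon\to0^+$. Its partial transpose, proportional to $\mathds 1-(1+\epsilon)\ketbra{\Phi^+}{\Phi^+}$, has a negative eigenvalue, so $\sigma_\epsilon$ is NPT and hence entangled. Choosing $\epsilon$ small enough that $R(\sigma_\epsilon)\le R(\rho^{AB})$ completes the argument; this is the tightness statement of Sec.~E.
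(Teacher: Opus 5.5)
Your proof is correct and follows the paper's argument. Sufficiency uses the monotonicity of $R$ under stochastic unital maps together with the spectral separability lemma of Sec.~B. Necessity uses the same family $\omega_t=(\mathds{1}-t\ketbra{\Phi^+}{\Phi^+}^{\Gamma})/(D-t)$ (your $\sigma_\epsilon$ is $\omega_{1+\epsilon}$), certified entangled via its partial transpose and reached through the Sec.~D transformation criterion; the only difference is cosmetic, since the paper tunes $t$ so that $R(\omega_t)=R(\rho)$ exactly while you take $t$ slightly above $1$ with $R(\sigma_\epsilon)\le R(\rho)$, which also covers singular $\rho$ uniformly.
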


\begin{proof}
Suppose first that
\[
    R(\rho)\le \frac{d+1}{d-1}.
\]
Then \(\rho\) is full rank. Let \(\Lambda\) be any stochastic unital operation
with \(\Lambda(\rho)\neq 0\), and define
\[
    \sigma=\frac{\Lambda(\rho)}{\Tr[\Lambda(\rho)]}.
\]
By monotonicity of the spectral ratio under stochastic unital maps,
\[
    R(\sigma)\le R(\rho)\le \frac{d+1}{d-1}.
\]
Hence, by Lemma~\ref{spec-sep-lemma}, \(\sigma\) is separable. Thus no
stochastic unital operation can produce entanglement from \(\rho\), so \(\rho\)
is completely absolutely separable.

Conversely, suppose
\[
    R(\rho)> \frac{d+1}{d-1}.
\]
We show that \(\rho\) can be mapped probabilistically to an entangled state.

Let
\[
    \ket{\Phi^+}
    =
    \frac1{\sqrt d}\sum_{i=1}^d \ket{i}_A\otimes\ket{i}_B,
    \qquad
    W=\ketbra{\Phi^+}{\Phi^+}^\Gamma .
\]
Here \(\Gamma\) denotes partial transposition with respect to \(B\). For \(0\le t<d\), define
\[
    \omega_t
    =
    \frac{\mathds{1}-tW}{D-t},
    \qquad
    D=d_Ad_B .
\]
Since \(W\) has spectrum contained in \([-1/d,1/d]\), the operator
\(\omega_t\) is a state. Its spectral ratio is
\[
    R(\omega_t)
    =
    \frac{1+t/d}{1-t/d}.
\]
Moreover,
\[
    \omega_t^\Gamma
    =
    \frac{\mathds{1}-t\ketbra{\Phi^+}{\Phi^+}}{D-t}.
\]
Thus, whenever \(t>1\), the partial transpose has eigenvalue
\[
    \frac{1-t}{D-t}<0,
\]
so \(\omega_t\) is entangled. Now choose
\[
    t=d\left(\frac{R(\rho)-1}{R(\rho)+1}\right).
\]
The assumption \(R(\rho)>(d+1)/(d-1)\) implies \(t>1\), and clearly \(t<d\).
Therefore \(\omega_t\) is entangled and satisfies
\[
    R(\omega_t)=R(\rho).
\]

If \(\rho\) is full rank, the stochastic-unital transformation criterion gives
a stochastic unital map sending \(\rho\) to \(\omega_t\) with nonzero
probability, since \(R(\rho)=R(\omega_t)\). If \(\rho\) is not full rank, the
singular-input case of the same criterion allows a stochastic unital
transformation from \(\rho\) to any target state, in particular to
\(\omega_t\).

Hence, whenever
\[
    R(\rho)> \frac{d+1}{d-1},
\]
the state \(\rho\) can be probabilistically transformed by a stochastic unital
operation into an entangled state. Therefore \(\rho\) is not completely
absolutely separable. This proves the theorem.
\end{proof}

\subsection{Proof of Theorem 2}
\begin{theorem}
Let \(\rho\) be a quantum state on
\(\mathcal H_A\otimes\mathcal H_B\), with \(d_A\le d_B\). Then the
following statements are equivalent:
\begin{enumerate}[label=(\roman*)]
    \item
    \(\rho\otimes \frac{\mathds{1}_{B'}}{d_{B'}}\) is absolutely separable
    across the bipartition \(A:BB'\) for every \(d_{B'}\in\mathbb N\).

    \item
    \(\rho\otimes \frac{\mathds{1}_{B'}}{d_{B'}}\) is absolutely PPT
    across the bipartition \(A:BB'\) for every \(d_{B'}\in\mathbb N\).

    \item
    \[
        R(\rho)\le \frac{d_A+1}{d_A-1}.
    \]
\end{enumerate}
\end{theorem}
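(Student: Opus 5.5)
We prove the cycle (iii)$\Rightarrow$(i)$\Rightarrow$(ii)$\Rightarrow$(iii).

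\emph{(iii)$\Rightarrow$(i).} The state $\rho\otimes \mathds{1}_{B'}/d_{B'}$ has the same extremal eigenvalues as $\rho$, each divided by $d_{B'}$. Hence
\[
R\bigl(\rho\otimes \tfrac{\mathds{1}_{B'}}{d_{B'}}\bigr)=R(\rho).
\]
The ratio $R$ is also invariant under global unitaries $U$ on $ABB'$. Across the cut $A:BB'$ the smaller local dimension is $\min\{d_A,d_Bd_{B'}\}=d_A$, since $d_A\le d_B$. So for every unitary $U$, the state $U(\rho\otimes\mathds{1}_{B'}/d_{B'})U^\dagger$ satisfies $R\le (d_A+1)/(d_A-1)$, and Lemma~\ref{spec-sep-lemma} shows it is separable. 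This gives absolute separability for every $d_{B'}$.

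\emph{(i)$\Rightarrow$(ii).} This is immediate, because every separable state is PPT.

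\emph{(ii)$\Rightarrow$(iii).} This is the substantive direction, and I would prove the contrapositive. Assume $R(\rho)>(d_A+1)/(d_A-1)$. We must find some $d_{B'}$ and a unitary $U$ such that $U(\rho\otimes\mathds{1}/d_{B'})U^\dagger$ is not PPT across $A:BB'$. The natural target is the family $\omega_t$ from the proof of Theorem~\ref{main}, built on $A:BB'$ with $d=d_A$. Let $\ket{\Phi^+}$ be maximally entangled of Schmidt rank $d_A$, with $W=\ketbra{\Phi^+}{\Phi^+}^\Gamma$. This $W$ has eigenvalues $\pm 1/d_A$, each with multiplicity about $d_A(d_A\mp1)/2$, and $0$ on the rest of the space. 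Its NPT-ness requires $t>1$, equivalently $R(\omega_t)=(1+t/d_A)/(1-t/d_A)$ exceeding $(d_A+1)/(d_A-1)$.

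We cannot reach $\omega_t$ by a unitary, since $\omega_t$ has three distinct eigenvalues while $\rho\otimes\mathds{1}$ has the spectrum of $\rho$. The idea is therefore to aim only for a state whose partial transpose has a negative expectation value on $\ket{\Phi^+}$. Pick an eigenvector $\ket{x}$ of $\rho$ for $\lambda_{\max}$ and $\ket{y}$ for $\lambda_{\min}$. With a large ancilla, the spectrum of $\rho\otimes\mathds{1}/d_{B'}$ contains $\lambda_{\max}/d_{B'}$ and $\lambda_{\min}/d_{B'}$, each with multiplicity $d_{B'}$. Once $d_{B'}\ge d_A(d_A+1)/2$ we can choose a unitary $U$ that places
\begin{itemize}
\item $\lambda_{\min}/d_{B'}$ on the whole $+1/d_A$ eigenspace of $W$ (the symmetric part),
\item $\lambda_{\max}/d_{B'}$ on the whole $-1/d_A$ eigenspace (the antisymmetric part),
\item the remaining eigenvalues arbitrarily on $\ker W$.
\end{itemize}
Call the result $\tau$. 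Then
\[
\bra{\Phi^+}\tau^\Gamma\ket{\Phi^+}=\Tr(\tau W)=\frac{1}{d_{B'}d_A}\Bigl(\tfrac{d_A(d_A+1)}{2}\lambda_{\min}-\tfrac{d_A(d_A-1)}{2}\lambda_{\max}\Bigr),
\]
which is negative exactly when $R(\rho)>(d_A+1)/(d_A-1)$. So $\tau$ is NPT, and $\rho\otimes\mathds{1}/d_{B'}$ is not absolutely PPT.

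The main obstacle is bookkeeping. I need to verify the multiplicities of the $\pm1/d_A$ eigenspaces of the flip-type operator $W$ inside $\mathbb{C}^{d_A}\otimes\mathbb{C}^{d_A}\subset\mathcal H_A\otimes\mathcal H_{BB'}$. I also need to confirm that enough copies of $\lambda_{\min}$ and $\lambda_{\max}$ are available after tensoring. Taking $d_{B'}\ge d_A(d_A+1)/2$ ensures this for both. The same computation also shows why the threshold is exactly $(d_A+1)/(d_A-1)$.
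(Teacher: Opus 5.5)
Your proposal is correct and matches the paper's proof: (iii)$\Rightarrow$(i) via Lemma~\ref{spec-sep-lemma} with smaller local dimension $d_A$, and (ii)$\Rightarrow$(iii) by taking $d_{B'}\ge d_A(d_A+1)/2$ and pairing copies of $\lambda_{\max}/d_{B'}$ with the negative eigenvalues and copies of $\lambda_{\min}/d_{B'}$ with the positive eigenvalues of the partially transposed maximally entangled projector. The only differences are that the paper argues directly via the rearrangement inequality for general Schmidt coefficients rather than by contrapositive with an explicit unitary, and that your multiplicity line has a sign slip: $+1/d_A$ has multiplicity $d_A(d_A+1)/2$ and $-1/d_A$ has $d_A(d_A-1)/2$, i.e.\ $d_A(d_A\pm1)/2$ rather than $\mp$, which is what your trace formula in fact uses.
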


\begin{proof}
We prove the implications
\[
    \text{(ii)}\Rightarrow \text{(iii)}
    \Rightarrow \text{(i)}
    \Rightarrow \text{(ii)} .
\]

\medskip\noindent
\emph{Proof of \((ii)\Rightarrow(iii)\).}
Assume that \(\rho\otimes \frac{\mathds{1}_{B'}}{d_{B'}}\) is absolutely PPT for every
\(d_{B'}\). Thus, for every unitary \(U\) and every vector \(\ket{\psi}\),
\[
    \Tr\left[
        \left(
            U\left(\rho\otimes\frac{\mathds{1}_{B'}}{d_{B'}}\right)U^\dagger
        \right)^\Gamma
        \ketbra{\psi}{\psi}
    \right]\ge 0.
\]
Equivalently,
\[
    \Tr\left[
        \left(\rho\otimes\frac{\mathds{1}_{B'}}{d_{B'}}\right)
        U^\dagger \ketbra{\psi}{\psi}^{\Gamma} U
    \right]\ge 0 .
\]
Here \(\Gamma\) denotes partial transposition with respect to \(BB'\).

Since condition (ii) holds for every \(d_{B'}\), choose
\begin{equation}\label{range}
    d_{B'}\ge \frac{d_A(d_A+1)}{2}.
\end{equation}
Write the Schmidt decomposition of \(\ket{\psi}\) across \(A:BB'\) as
\[
    \ket{\psi}
    =
    \sum_{i=1}^{d_A}\sqrt{p_i}\,
    \ket{i}_A\otimes\ket{i}_{BB'},
    \qquad
    p_i\ge 0,
    \qquad
    \sum_i p_i=1.
\]
The spectrum of \(\ketbra{\psi}{\psi}^{\Gamma}\) consists of
\[
    p_i,
    \qquad
    \pm\sqrt{p_i p_j}\quad (i<j),
\]
together with possible additional zero eigenvalues. Hence it has
\[
    \frac{d_A(d_A-1)}{2}
\]
negative eigenvalues and
\[
    \frac{d_A(d_A+1)}{2}
\]
positive eigenvalues. The eigenvalues of
\[
    \rho\otimes\frac{\mathds{1}_{B'}}{d_{B'}}
\]
are the eigenvalues of \(\rho\) divided by \(d_{B'}\), each repeated
\(d_{B'}\) times. By the choice of \(d_{B'}\) (see Eq. (\ref{range})), there are enough copies of
\(\lambda_{\max}(\rho)/d_{B'}\) to pair with all negative eigenvalues of
\(\ketbra{\psi}{\psi}^{\Gamma}\), and enough copies of
\(\lambda_{\min}(\rho)/d_{B'}\) to pair with all positive eigenvalues.

Using the rearrangement inequality for Hermitian matrices, we therefore obtain
\[
\begin{aligned}
0
&\le
\min_U
\Tr\left[
    \left(\rho\otimes\frac{\mathds{1}_{B'}}{d_{B'}}\right)
    U^\dagger \ketbra{\psi}{\psi}^{\Gamma} U
\right] \\
&=
-\frac{\lambda_{\max}(\rho)}{d_{B'}}
\sum_{i<j}\sqrt{p_i p_j}
+
\frac{\lambda_{\min}(\rho)}{d_{B'}}
\left(
    \sum_i p_i
    +
    \sum_{i<j}\sqrt{p_i p_j}
\right).
\end{aligned}
\]
Since \(\sum_i p_i=1\), this implies
\[
    \lambda_{\max}(\rho)
    \sum_{i<j}\sqrt{p_i p_j}
    \le
    \lambda_{\min}(\rho)
    \left(
        1+\sum_{i<j}\sqrt{p_i p_j}
    \right).
\]
Thus, whenever \(\sum_{i<j}\sqrt{p_i p_j}>0\),
\[
    R(\rho)
    \le
    1+
    \frac{1}{\sum_{i<j}\sqrt{p_i p_j}} .
\]

Now choose \(\ket{\psi}\) maximally entangled across a \(d_A\)-dimensional
subspace, i.e.
\[
    p_i=\frac1{d_A}
    \qquad
    \text{for all } i=1,\dots,d_A .
\]
Then
\[
    \sum_{i<j}\sqrt{p_i p_j}
    =
    \frac{d_A-1}{2}.
\]
Therefore
\[
    R(\rho)
    \le
    1+\frac{2}{d_A-1}
    =
    \frac{d_A+1}{d_A-1}.
\]
This proves (iii).

\medskip\noindent
\emph{Proof of \((iii)\Rightarrow(i)\).}
Assume
\[
    R(\rho)\le \frac{d_A+1}{d_A-1}.
\]
For every \(d_{B'}\),
\[
    R\left(
        \rho\otimes\frac{\mathds{1}_{B'}}{d_{B'}}
    \right)
    =
    R(\rho).
\]
Moreover, for the bipartition \(A:BB'\), the smaller local dimension is
\(d_A\). Hence for every global unitary \(U\),
\[
    R\left(
        U\left(\rho\otimes\frac{\mathds{1}_{B'}}{d_{B'}}\right)U^\dagger
    \right)
    =
    R(\rho)
    \le
    \frac{d_A+1}{d_A-1}.
\]
By Lemma~\ref{spec-sep-lemma}, every state in the unitary orbit of
\[
    \rho\otimes\frac{\mathds{1}_{B'}}{d_{B'}}
\]
is separable across \(A:BB'\). Therefore
\[
    \rho\otimes\frac{\mathds{1}_{B'}}{d_{B'}}
\]
is absolutely separable for every \(d_{B'}\). This proves (i).

\medskip\noindent
\emph{Proof of \((i)\Rightarrow(ii)\).}
This is immediate, since every separable state is PPT. Therefore absolute
separability implies absolute PPT.

Combining the three implications proves the equivalence.
\end{proof}

\subsection{CAS states which cannot be detected by other existing spectral separability conditions}

The best previously known spectral sufficient condition for absolute
separability that we compare with is the convex hull of the two regions \cite{Abellanet_Vidal_2025}
\begin{align}
    A
    &:=
    \left\{\rho:\rho\succeq \frac{\mathds{1}}{D+2},\ \Tr\rho=1\right\},\label{1st}
    \\
    B
    &:=
    \left\{\rho:\Tr(\rho^2)\le \frac{1}{D-1},\ \Tr\rho=1\right\}. \label{2nd}
\end{align}
We denote this region by
\[
    C:=\operatorname{conv}(A\cup B).
\]
The next lemma shows that the eigenvalue-ratio criterion detects states that
are not contained in this convex hull.

\begin{lemma}
Let \(2\le d_A<d_B\), \(D=d_Ad_B\), and
\[
    R:=\frac{d_A+1}{d_A-1}.
\]
Set
\[
    p:=\lfloor D/2\rfloor,
    \qquad
    q:=\lceil D/2\rceil,
    \qquad
    \ell:=\frac{1}{p+qR}.
\]
Define
\[
    \tilde\rho
    =
    \operatorname{diag}
    \left(
        \underbrace{\ell,\ldots,\ell}_{p},
        \underbrace{R\ell,\ldots,R\ell}_{q}
    \right).
\]
Then
\[
    R(\tilde\rho)=R=\frac{d_A+1}{d_A-1},
\]
but
\[
    \tilde\rho\notin C.
\]
Thus the eigenvalue-ratio criterion is not contained in the previously known
convex-hull criterion.
\end{lemma}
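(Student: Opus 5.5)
Since $C$ is convex, the plan is to exhibit a linear functional $L$ with $L(\tilde\rho)>\sup_{\sigma\in C}L(\sigma)$. Let $P$ be the projector onto the $q$-dimensional span of the basis vectors carrying the large eigenvalue $R\ell$, and set $L(\sigma):=\Tr(P\sigma)$. For $\sigma=t\sigma_A+(1-t)\sigma_B$ with $\sigma_A\in A$ and $\sigma_B\in B$ we have $L(\sigma)\le\max\{\sup_A L,\sup_B L\}$. It therefore suffices to bound $L$ separately on $A$ and on $B$, and to show that
\[
L(\tilde\rho)=\frac{qR}{p+qR}
\]
strictly exceeds both bounds.

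\emph{Bound on $A$.} Every $\sigma\in A$ satisfies $\Tr((\mathds 1-P)\sigma)\ge p/(D+2)$, so $\sup_A L= 1-p/(D+2)$. In the even case $p=q=D/2$ we get $L(\tilde\rho)=R/(1+R)=(d_A+1)/(2d_A)$. The inequality $(d_A+1)/(2d_A)>(D+4)/(2(D+2))$ reduces to $D+2>2d_A$, which clearly holds. The odd case, with $p=(D-1)/2$ and $q=(D+1)/2$, is the same elementary comparison with slightly shifted constants.

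\emph{Bound on $B$.} Maximizing the linear function $\Tr(P\sigma)$ under the purity constraint is a symmetric problem, so it suffices to consider $\sigma$ equal to $x$ on $P$ and $y$ on $\mathds 1-P$; averaging over unitaries commuting with $P$ preserves $L$ and does not increase purity. Parametrize $x=(1+u)/D$ and $y=(1-uq/p)/D$. Then
\[
\Tr\sigma^2=\frac1D+\frac{u^2q}{pD},
\]
and the constraint $\Tr\sigma^2\le 1/(D-1)$ becomes $u^2\le p/(q(D-1))$. Writing $L=q(1+u)/D$, the state $\tilde\rho$ corresponds to $\tilde u=(R-1)/(R+1)=1/d_A$ in the even case. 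The claim then reduces to
\[
\frac{1}{d_A^2}>\frac{1}{D-1}\iff D-1>d_A^2,
\]
which follows from $d_B\ge d_A+1$, since then $D\ge d_A^2+d_A\ge d_A^2+2$. The strict inequality $d_A<d_B$ is used exactly here.

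The main obstacle is the odd-$D$ case of the $B$ bound. There $\tilde u$ is no longer exactly $1/d_A$, and the constraint carries the factor $p/q=(D-1)/(D+1)$. I would compute $\tilde u=qR\cdot D/(q(p+qR))-1$ explicitly and compare $\tilde u^2$ with $(D-1)/((D+1)(D-1))=1/(D+1)$, again using $D\ge d_A(d_A+1)$. The slack $D-1-d_A^2\ge d_A-1\ge 1$ available in the even case should absorb the $O(1/D)$ corrections. Finally, $R(\tilde\rho)=R$ holds by construction.
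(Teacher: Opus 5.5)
Your proposal is correct and is essentially the paper's argument. In your notation the paper's $Z$ equals $\frac{\mathds{1}-P}{p}-\frac{P}{q}$, so its witness satisfies $\Tr(W\sigma)=\frac1D+\sqrt{\frac{D-1}{D}}\,\frac{1}{\|Z\|_2}\left(\frac{1-L(\sigma)}{p}-\frac{L(\sigma)}{q}\right)$. This is a decreasing affine function of your $L$ that vanishes exactly at $\sup_B L$, and the paper's Cauchy--Schwarz step plays the role of your twirling. Your condition $\tilde u^2>p/(q(D-1))$, with $\tilde u=p(R-1)/(p+qR)$, is literally the paper's inequality $(*)$. The only difference is that the paper verifies $\sup_A L\le\sup_B L$ through $\lambda_{\min}(W)$, whereas you compare $L(\tilde\rho)$ with $\sup_A L$ directly.

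The odd-$D$ case you left as a sketch closes as you expected. There $\tilde u=(D-1)/(d_AD+1)$, and $\tilde u^2>1/(D+1)$ is equivalent to $D(D-1-d_A^2)>2d_A+1$, which is exactly the paper's odd-case inequality. It follows from $D\ge d_A(d_A+1)$ and $D-1-d_A^2\ge d_A-1$, since $d_A^3-d_A>2d_A+1$ for $d_A\ge2$. The odd case of your $A$ comparison likewise reduces to $(R-1)(D+1)>4$, which clearly holds.
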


\begin{proof}
The identity \(R(\tilde\rho)=R\) is immediate from the definition.

Let \(P\) be the spectral projection of \(\tilde\rho\) corresponding to the
eigenvalue \(\ell\), and define
\[
    Z:=\frac{1}{p}P-\frac{1}{q}(\mathds{1}-P).
\]
Then
\[
    \Tr Z=0,
    \qquad
    \|Z\|_2^2=\frac1p+\frac1q=\frac{D}{pq}.
\]
Let us define $W$
\[
    W:=
    \frac{\mathds{1}}{D}
    +
    \sqrt{\frac{D-1}{D}}\,
    \frac{Z}{\|Z\|_2}.
\]
We show that \(W\) is nonnegative on \(C\), but negative on \(\tilde\rho\).

First let \(\rho\in B\) (see Eq. (\ref{2nd})). Writing \(X=\rho-\mathds{1}/D\), we have
\[
    \Tr X=0,
    \qquad
    \|X\|_2^2
    =
    \Tr(\rho^2)-\frac1D
    \le
    \frac{1}{D(D-1)}.
\]
Hence, by Cauchy--Schwarz,
\[
    \Tr(W\rho)
    =
    \frac1D
    +
    \sqrt{\frac{D-1}{D}}\,
    \frac{\Tr(ZX)}{\|Z\|_2}
    \ge
    \frac1D
    -
    \sqrt{\frac{D-1}{D}}\,
    \frac{1}{\sqrt{D(D-1)}}
    =
    0.
\]
Thus \(W\) is nonnegative on \(B\).

Next let \(\rho\in A\) (see Eq. (\ref{1st})). Write
\[
    \rho=\frac{\mathds{1}}{D+2}+X,
    \qquad
    X\succeq0,
    \qquad
    \Tr X=\frac{2}{D+2}.
\]
Since
\[
    \lambda_{\min}(W)
    =
    \frac1D
    -
    \frac1D\sqrt{\frac{(D-1)p}{q}},
\]
we get
\[
\begin{aligned}
    \Tr(W\rho)
    &\ge
    \frac{1}{D+2}
    +
    \frac{2}{D+2}\lambda_{\min}(W)  \\
    &=
    \frac{D+2-2\sqrt{\frac{(D-1)p}{q}}}{D(D+2)}
    \ge 0.
\end{aligned}
\]
Indeed, \(p\le q\), so whenever $D>4$
\[
    2\sqrt{\frac{(D-1)p}{q}}\le 2\sqrt{D-1}<D+2.
\]
Therefore \(W\) is nonnegative on \(A\), and hence on
\(C=\operatorname{conv}(A\cup B)\).

It remains to evaluate \(W\) on \(\tilde\rho\). Since
\[
    \Tr(Z\tilde\rho)=\ell-R\ell=-(R-1)\ell,
\]
and
\[
    \ell=\frac{1}{p+qR},
    \qquad
    \|Z\|_2^2=\frac{D}{pq},
\]
we get
\[
    \Tr(W\tilde\rho)
    =
    \frac1D
    -
    \frac{(R-1)}{p+qR}
    \sqrt{\frac{D-1}{D}}\,
    \frac{1}{\|Z\|_2}.
\]
Equivalently,
\[
    \Tr(W\tilde\rho)
    =
    \frac1D
    \left[
        1
        -
        \frac{(R-1)\sqrt{(D-1)pq}}{p+qR}
    \right].
\]
Hence \(\Tr(W\tilde\rho)<0\) iff
\[
    (R-1)\sqrt{(D-1)pq}>p+qR.
    \tag{\(*\)}
\]

For \(R=(d_A+1)/(d_A-1)\), \(D=d_Ad_B\), and
\(p=\lfloor D/2\rfloor\), \(q=\lceil D/2\rceil\), inequality \((*)\) holds
whenever \(2\le d_A<d_B\). Indeed, if \(D\) is even, \((*)\) reduces to
\[
    \sqrt{D-1}>d_A,
\]
which follows from \(D=d_Ad_B\) and \(d_B>d_A\). If \(D\) is odd, \((*)\)
reduces, after squaring, to
\[
    d_Ad_B
    \left[
        d_A^2d_B(d_B-d_A)-d_Ad_B-2d_A-1
    \right]>0,
\]
which is true for all \(d_B>d_A\); the bracket is increasing in \(d_B\) and at
\(d_B=d_A+1\) equals \(d_A^3-3d_A-1>0\).

Therefore \(\Tr(W\tilde\rho)<0\). Since \(W\) is nonnegative on \(C\), it separates \(\tilde\rho\) from \(C\). Hence
\[
    \tilde\rho\notin C.
\]
\end{proof}

\subsection{Maximum purity of CAS states}
    \label{sec:H}
\begin{lemma}
    \label{lem:lemma4}
Let \(\rho\) be a completely absolutely separable state on \(\mathcal H_A\otimes \mathcal H_B\), where
\(\dim\mathcal H_A=d_A\), \(\dim\mathcal H_B=d_B\), and \(d_A\le d_B\). Then
\begin{equation}
\label{lemma4}
    \Tr(\rho^2)\le \left(\frac{d_A}{d_B}\right)\,\frac{1}{d_A^2-1}.
\end{equation}
\end{lemma}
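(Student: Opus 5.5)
The plan is to reduce the statement to an elementary extremal problem on spectra, using Theorem~\ref{main} as the only input from the paper. Since $\rho$ is completely absolutely separable, Theorem~\ref{main} gives that $\rho$ is full rank and that
\[
R(\rho)\le R_0:=\frac{d_A+1}{d_A-1},
\]
because $d=\min\{d_A,d_B\}=d_A$. It therefore suffices to maximize $\Tr(\rho^2)=\sum_{i=1}^{D}\lambda_i^2$, with $D=d_Ad_B$, over all probability vectors $(\lambda_i)$ satisfying $\lambda_{\max}\le R_0\lambda_{\min}$, and then check that the maximum equals the right-hand side of Eq.~\eqref{lemma4}.

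\textbf{Pointwise bound on the eigenvalues.} Put $\ell:=\lambda_{\min}(\rho)>0$. Every eigenvalue lies in $[\ell,R_0\ell]$, so $(\lambda_i-\ell)(\lambda_i-R_0\ell)\le 0$. Rearranging gives, for each $i$,
\[
\lambda_i^2\le (1+R_0)\ell\,\lambda_i-R_0\ell^2 .
\]

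\textbf{Summing and optimizing over $\ell$.} Summing over $i$ and using $\sum_i\lambda_i=1$ yields
\[
\Tr(\rho^2)\le (1+R_0)\ell-R_0D\ell^2 .
\]
The right-hand side is a concave quadratic in $\ell$. We bound it by its unconstrained maximum, attained at $\ell^*=(1+R_0)/(2R_0D)$; no feasibility constraint on $\ell$ is needed, because we only want an upper bound. This gives
\[
\Tr(\rho^2)\le \frac{(1+R_0)^2}{4R_0D}.
\]

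\textbf{Substituting $R_0$.} With $R_0=(d_A+1)/(d_A-1)$ we have $1+R_0=2d_A/(d_A-1)$ and $4R_0=4(d_A+1)/(d_A-1)$. The bound then becomes
\[
\frac{d_A^2}{(d_A^2-1)\,D}=\frac{d_A}{d_B}\,\frac{1}{d_A^2-1},
\]
which is exactly Eq.~\eqref{lemma4}.

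\textbf{Expected obstacle.} The only step needing care is the reduction from a constraint on the spectral ratio to a bound on the purity. A direct approach would maximize a convex function over the non-polytope-looking set $\{\lambda_{\max}\le R_0\lambda_{\min}\}$. The pointwise inequality $(\lambda_i-\ell)(\lambda_i-R_0\ell)\le0$ avoids any extreme-point analysis. Tightness is not required for the statement, but one can note that it is approached by two-valued spectra with a fraction $1/(R_0+1)$ of the eigenvalues at $R_0\ell$, whenever that count is an integer.
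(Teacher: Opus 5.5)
Your proof is correct and is essentially the paper's argument. The paper sums $(\lambda_{\max}-\lambda_i)(\lambda_i-\lambda_{\min})\ge 0$, then substitutes $\lambda_{\max}\le \frac{d_A+1}{d_A-1}\lambda_{\min}$ and maximizes the same concave quadratic in $\lambda_{\min}$ (maximizer $1/(d_B(d_A+1))$); your version puts $R_0\ell$ into the pointwise product from the start, which is slightly cleaner because it avoids the paper's unstated use of $1-D\lambda_{\min}\ge 0$ in that substitution.
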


\begin{proof}
Write \(D:=d_A d_B\) for the total dimension and let \(\{\lambda_i\}_{i=1}^D\) be the eigenvalues of \(\rho\).
Denote by \(\lambda_{\max}\) and \(\lambda_{\min}\) the largest and smallest eigenvalue, respectively.
We start from a simple inequality that relates the purity to the extreme eigenvalues.

For every \(i\), the eigenvalue \(\lambda_i\) lies between \(\lambda_{\min}\) and \(\lambda_{\max}\), i.e.,
\begin{equation}
\lambda_{\max}-\lambda_i\ge 0
\qquad\text{and}\qquad
\lambda_i-\lambda_{\min}\ge 0.
\end{equation}
Therefore,
\begin{equation}
(\lambda_{\max}-\lambda_i)(\lambda_i-\lambda_{\min})\ge 0
\qquad\text{for all } i.
\end{equation}
Summing over \(i\) and expanding the product gives
\begin{equation}
    \label{eq:eq80}
\begin{aligned}
0
&\le \sum_{i=1}^D (\lambda_{\max}-\lambda_i)(\lambda_i-\lambda_{\min})\\
&= \sum_{i=1}^D\bigl(\lambda_{\max}\lambda_i-\lambda_{\max}\lambda_{\min}-\lambda_i^2+\lambda_{\min}\lambda_i\bigr)\\
&= \lambda_{\max}\sum_{i=1}^D \lambda_i \;-\; D\lambda_{\max}\lambda_{\min}
\;-\;\sum_{i=1}^D \lambda_i^2 \;+\; \lambda_{\min}\sum_{i=1}^D \lambda_i.
\end{aligned}
\end{equation}
Using \(\sum_i \lambda_i=\Tr(\rho)=1\) and \(\sum_i\lambda_i^2=\Tr(\rho^2)\), we obtain
\begin{equation}
0\le (\lambda_{\max}+\lambda_{\min}) - D\lambda_{\max}\lambda_{\min} - \Tr(\rho^2),
\end{equation}
or equivalently,
\begin{equation}\label{eq:purity-extremes}
\Tr(\rho^2)\le \lambda_{\max}+\lambda_{\min}-D\lambda_{\max}\lambda_{\min}=\lambda_{\max} (1-D\lambda_{\min})+\lambda_{\min}.
\end{equation}

Next we use complete absolute separability. By assumption
\(\lambda_{\max}\le \left(\frac{d_A+1}{d_A-1}\right)\lambda_{\min}\). Plugging this into
\eqref{eq:purity-extremes} yields a bound that depends only on \(\lambda_{\min}\):
\begin{equation}
\label{eq:puritybound1}
    \Tr(\rho^2)\le \left(\frac{d_A+1}{d_A-1}+1\right)\lambda_{\min}-D\frac{d_A+1}{d_A-1}\,\lambda_{\min}^2.
\end{equation}

The right hand side of \eqref{eq:puritybound1}
is a concave quadratic function of $\lambda_{\min}$, hence it attains its maximum at the unique critical point
\begin{equation}
\label{mineigencas}
\lambda_{\min}=\frac{1}{d_B(d_A+1)}.
\end{equation}
Therefore,
\begin{equation}
\Tr[\rho^2]\le \left(\frac{d_A}{d_B}\right)\,\frac{1}{d_A^2-1}
\end{equation}
which is the desired bound.
\end{proof}
Note that, for equal local dimensions (i.e. $d_A=d_B$), Eq.~\eqref{lemma4} coincides with the purity ball criterion $\left(\Tr(\rho^2)\le \frac{1}{d_Ad_B-1}\right)$. This means that, for equal local dimensions, CAS is a subset of the purity ball.

\subsection{Maximum purity of AS states}
For two qubits, \(d_A=d_B=2\), the maximal purity achievable by an absolutely
separable state is known exactly and is equal to \(3/8\)
\cite{dung}. The next lemma gives an upper
bound on the corresponding maximal purity in total dimension
\(D=d_A d_B>4\).
\begin{lemma}
    There does not exist any absolutely separable state with purity more than $2/D$, where $D$ is the total dimension of the system.
\end{lemma}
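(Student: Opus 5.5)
The plan is to combine the fact that absolutely separable states are absolutely PPT with a single well-chosen partial-transpose witness, exactly as in the proof of $(ii)\Rightarrow(iii)$ in Theorem~2. That yields one linear inequality on the spectrum, and maximizing the purity under it gives $2/D$.

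\textbf{Step 1: a linear spectral constraint.} Order the eigenvalues $\lambda_1\ge\lambda_2\ge\dots\ge\lambda_D$. Take
\[
\ket{\psi}=\tfrac{1}{\sqrt2}(\ket{00}+\ket{11}),
\]
supported on a $2\otimes 2$ subspace of $\mathcal H_A\otimes\mathcal H_B$. The operator $\ketbra{\psi}{\psi}^\Gamma$ has eigenvalues $\tfrac12,\tfrac12,\tfrac12,-\tfrac12$, and all its other eigenvalues are $0$. If $\rho$ is absolutely separable, then every $U\rho U^\dagger$ is PPT, so
\[
\Tr[\rho\, U^\dagger\ketbra{\psi}{\psi}^\Gamma U]\ge 0 \quad\text{for all } U.
\]
Minimizing over $U$ with the rearrangement inequality pairs $-\tfrac12$ with $\lambda_1$ and the three entries $\tfrac12$ with $\lambda_{D-2},\lambda_{D-1},\lambda_D$. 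This gives the necessary condition
\[
\lambda_1\le \lambda_{D-2}+\lambda_{D-1}+\lambda_D .
\]

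\textbf{Step 2: reduce to vertices.} Let $P$ be the polytope of ordered probability vectors satisfying the Step~1 inequality. The purity $\sum_i\lambda_i^2$ is convex, so its maximum over $P$ is attained at a vertex. A vertex must make $D-1$ of the defining inequalities tight. These inequalities are the $D-1$ ordering constraints, $\lambda_D\ge0$, and the Step~1 constraint. Hence every vertex has at most two distinct levels, of the form $(a^{k},b^{D-k})$ with $a\ge b$. In addition, either $b=0$ or the Step~1 constraint is tight.

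\textbf{Step 3: check the vertex families.} I will treat the cases in turn.

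If $D-k\ge3$ and $b=0$, the Step~1 constraint becomes $a\le 0$, which is impossible.

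If $D-k\in\{1,2\}$, the constraint holds automatically. The extreme candidates are then rank $D-1$ or $D-2$ with equal nonzero eigenvalues. Their purities are $1/(D-1)$ and $1/(D-2)$, and both are at most $2/D$ for $D\ge4$.

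If $D-k\ge3$ and the constraint is tight, then $a=3b$ and $b=1/(D+2k)$. The purity is
\[
\frac{D+8k}{(D+2k)^2}.
\]
The inequality $(D+8k)D\le 2(D+2k)^2$ is equivalent to $0\le D^2+8k^2$, so this purity is also at most $2/D$.

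The degenerate one-level vertex is $\mathds 1/D$, with purity $1/D$. Collecting the cases gives $\Tr\rho^2\le 2/D$ for every absolutely separable state.

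The main obstacle is Step~2: making sure the vertex enumeration is complete. In particular, I need to confirm that no three-level vertex arises when the Step~1 constraint is tight at the same time as some ordering equalities fail. I would handle this by counting tight constraints carefully. Since only one non-ordering equality, besides $\lambda_D=0$, is available, at most two ordering "breaks" can survive. I would then check directly that a three-level configuration forces either $\lambda_D=0$ together with the constraint (treated as above) or is not extreme.
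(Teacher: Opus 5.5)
Your Step~1 is correct and gives exactly the linear constraint the paper uses, $\lambda_1\le\lambda_{D-2}+\lambda_{D-1}+\lambda_D$. The paper instead obtains it by weakening Hildebrand's absolute-PPT condition $\lambda_1\le\lambda_{D-1}+2\sqrt{\lambda_D\lambda_{D-2}}$ with AM--GM; your derivation from a single $2\otimes2$ witness is self-contained.

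\textbf{The gap is in Step~2.} The claim that every vertex has at most two distinct levels is false. There are $D+1$ inequalities and a vertex needs only $D-1$ of them tight, so two ordering constraints may be slack. This happens precisely when $\lambda_D\ge0$ and the Step~1 constraint are both tight. In that case the zero level must have multiplicity one, the middle level multiplicity at least two, and the top level must be twice the middle. This gives genuine vertices
\[
\bigl(\underbrace{2b,\ldots,2b}_{k},\underbrace{b,\ldots,b}_{D-1-k},0\bigr),\qquad 1\le k\le D-3,\qquad b=\frac{1}{D-1+k}.
\]
They are vertices because the $D-3$ within-level equalities, $\lambda_D=0$, and the tight constraint are $D-1$ independent tight constraints. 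For $D=4$ this family is $(\tfrac12,\tfrac14,\tfrac14,0)$.

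Your final paragraph says this situation was ``treated above''. But the $b=0$ cases in Step~3 are only two-level vectors, so this family is never checked. The repair is short. Its purity is $\frac{D-1+3k}{(D-1+k)^2}$, and with $m=D-1$,
\[
\frac{2}{D}-\frac{m+3k}{(m+k)^2}=\frac{m(m-1)+k(m-3)+2k^2}{(m+1)(m+k)^2}\ge 0 \qquad (D\ge4).
\]
Four or more levels would leave only $D-2$ tight constraints, so with this family added the enumeration is complete and the lemma follows.

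\textbf{Comparison with the paper.} After the common linear constraint, the paper avoids vertex enumeration entirely. It bounds $\lambda_{D-2}+\lambda_{D-1}$ by twice the average of the middle eigenvalues, which gives $D\lambda_{\max}\le 2+(D-4)\lambda_{\min}$. It then substitutes this into $\Tr\rho^2\le\lambda_{\max}+\lambda_{\min}-D\lambda_{\max}\lambda_{\min}$, which follows from $\sum_i(\lambda_{\max}-\lambda_i)(\lambda_i-\lambda_{\min})\ge0$. That argument is shorter and has no case analysis, but it is lossy. Your polytope route, once completed, computes the exact maximum purity over the relaxed set. That maximum is strictly below $2/D$ for $D>4$, and for large $D$ it is about $4/(3D)$, coming from your $a=3b$ family near $k=D/4$. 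So your route costs more bookkeeping but would actually yield a sharper bound.
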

\begin{proof}
 We now combine the bound~\eqref{eq:purity-extremes} with the APPT necessary condition. By the arithmetic--geometric mean inequality,
\begin{equation}
2\sqrt{\lambda_D\lambda_{D-2}}\le \lambda_D+\lambda_{D-2}.
\end{equation}
Hence the APPT necessary condition
\begin{equation}
\lambda_{\max}\le \lambda_{D-1}+2\sqrt{\lambda_D\lambda_{D-2}}
\end{equation}
implies the simpler bound
\begin{equation}\label{simple1}
\lambda_{\max}\le \lambda_{D-2}+\lambda_{D-1}+\lambda_D.
\end{equation}

Since \(\lambda_D=\lambda_{\min}\), and since \(\lambda_{D-2},\lambda_{D-1}\) are the two smallest eigenvalues among \(\lambda_2,\dots,\lambda_{D-1}\), we have (for $D>4$)
\begin{equation}\label{simple2}
\lambda_{D-2}+\lambda_{D-1}
\le \frac{2}{D-2}\sum_{i=2}^{D-1}\lambda_i
= \frac{2}{D-2}(1-\lambda_{\max}-\lambda_{\min}).
\end{equation}
Therefore, combining Eq. (\ref{simple1}) and Eq. (\ref{simple2}), we get
\begin{equation}
\lambda_{\max}
\le \frac{2}{D-2}(1-\lambda_{\max}-\lambda_{\min})+\lambda_{\min},
\end{equation}
which simplifies to
\begin{equation}
D\lambda_{\max}\le 2+(D-4)\lambda_{\min}.
\end{equation}

Substituting this into~\eqref{eq:purity-extremes}, we obtain
\begin{equation}
\begin{aligned}
\Tr(\rho^2)
&\le \lambda_{\max}+\lambda_{\min}-D\lambda_{\max}\lambda_{\min} \\
&=\lambda_{\max} (1-D\lambda_{\min})+\lambda_{\min}\\
&\le \frac{2+(D-4)\lambda_{\min}}{D}
+\lambda_{\min}
-\lambda_{\min}\bigl(2+(D-4)\lambda_{\min}\bigr) \\
&= \frac{2}{D}-\frac{4}{D}\lambda_{\min}-(D-4)\lambda_{\min}^2 \\
&\le \frac{2}{D}.
\end{aligned}
\end{equation}
Thus,
\begin{equation}
\Tr(\rho^2)\le \frac{2}{D}.
\end{equation}
\end{proof}
In higher
dimensions ($D>4$), only upper bounds (Proposition~1 of
Ref.~\cite{Filippov_2017}) were previously known. We now compare it with the bound obtained above and show that our bound of $2/D$ is strictly stronger. In our notation, Proposition~1 of Ref.~\cite{Filippov_2017} states that, if
\[
    \frac{1}{k}\le \Tr(\rho^2)\le \frac{1}{k-1},
\]
then every absolutely separable state satisfies
\[
    1+\sqrt{\frac{k\Tr(\rho^2)-1}{k-1}}
    \le
    3k\sqrt{\frac{\Tr(\rho^2)}{D+8}}.
\]
Substituting
\(\Tr(\rho^2)=2/D\), the condition is satisfied strictly. Indeed, for even
\(D\), choosing \(k=D/2\), the condition becomes
\[
    1
    <
    3\sqrt{\frac{D}{2(D+8)}} ,
\]
which holds for all even \(D\ge4\). For odd \(D\), choosing
\(k=(D+1)/2\), the condition becomes
\[
    1+\sqrt{\frac{2}{D(D-1)}}
    <
    \frac{3(D+1)}{2}\sqrt{\frac{2}{D(D+8)}} ,
\]
which holds for all odd \(D\ge5\). Thus the condition of
Proposition~1 in Ref.~\cite{Filippov_2017} is satisfied strictly at
\(\Tr(\rho^2)=2/D\). By continuity, it remains satisfied for all values of
\(\Tr(\rho^2)\) in some interval above \(2/D\). Hence Proposition~1 of
Ref.~\cite{Filippov_2017} allows purities strictly larger than \(2/D\), whereas
the lemma above excludes all such purities. Therefore, the bound obtained here
is strictly stronger than the bound of
Ref.~\cite{Filippov_2017}.

\subsection{Proof that AS is not closed under k-copy tensor product}

\begin{lemma}
\label{lemma:tpclosure}
Let $\rho_{AB}$ be a full-rank absolutely separable state distinct from the maximally mixed state. Then for every natural number
\begin{equation}
    n >
    \frac{
    \ln\!\left(R(\rho_{AB})+2\sqrt{R(\rho_{AB})}\right)
    }{
    \ln R(\rho_{AB})
    },
\end{equation}
we have $\rho_{AB}^{\otimes n}\notin AS$, where
$R(\rho_{AB}):=\frac{\lambda_1}{\lambda_D}$ and
$\lambda_1\geq\lambda_2\geq\cdots\geq\lambda_D>0$ are the eigenvalues of
$\rho_{AB}$.
\end{lemma}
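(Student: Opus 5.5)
The plan is to show that $\rho_{AB}^{\otimes n}$, viewed as a bipartite state across $A^n:B^n$, violates a necessary condition for being absolutely PPT. Since every absolutely separable state is absolutely PPT, this rules out absolute separability. The condition I will use is the one already invoked in the proof of the maximum-purity bound for AS states. If $\mu_1\ge\mu_2\ge\cdots\ge\mu_N$ are the eigenvalues of an APPT state on a space of total dimension $N$, then
\[
\mu_1\le \mu_{N-1}+2\sqrt{\mu_N\,\mu_{N-2}} .
\]
This condition comes from rotating the state so that its three smallest and its largest eigenvalues sit on a $2\otimes 2$ block and demanding positivity of the partial transpose there. It holds for arbitrary local dimensions $\ge 2$, so it applies to $A^n:B^n$ with local dimensions $d_A^n$ and $d_B^n$.

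First I describe the spectrum of $\rho_{AB}^{\otimes n}$. Its eigenvalues are the products $\lambda_{i_1}\cdots\lambda_{i_n}$. The largest is $\mu_1=\lambda_1^n=R^n\lambda_D^n$, writing $R=R(\rho_{AB})$, and the smallest is $\mu_N=\lambda_D^n$. For the second and third smallest, I note that the $n$ distinct products obtained by replacing a single factor $\lambda_D$ in $\lambda_D^n$ with $\lambda_{D-1}$ are each at most $\lambda_D^{n-1}\lambda_{D-1}\le \lambda_D^{n-1}\lambda_1=R\lambda_D^n$. Together with $\lambda_D^n$ itself, this gives at least three eigenvalues (counted with multiplicity, for $n\ge 2$) that are all $\le R\lambda_D^n$. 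Hence $\mu_{N-1}\le R\lambda_D^n$ and $\mu_{N-2}\le R\lambda_D^n$.

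Substituting these bounds into the APPT condition, its right-hand side is at most
\[
R\lambda_D^n+2\sqrt{\lambda_D^n\cdot R\lambda_D^n}=\lambda_D^n\bigl(R+2\sqrt R\bigr).
\]
So if $R^n> R+2\sqrt R$, the APPT condition is violated. Since $\rho_{AB}$ is full rank and not maximally mixed, $R>1$ and $\ln R>0$. The inequality $R^n> R+2\sqrt R$ is then exactly $n>\ln(R+2\sqrt R)/\ln R$. Under this hypothesis there is a global unitary $U$ such that $(U\rho_{AB}^{\otimes n}U^\dagger)^\Gamma$ has a negative eigenvalue, so $U\rho_{AB}^{\otimes n}U^\dagger$ is entangled and $\rho_{AB}^{\otimes n}\notin AS$.

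The bound $\ln(R+2\sqrt R)/\ln R$ always exceeds $1$, so the hypothesis automatically gives $n\ge 2$ and the counting above is legitimate. The main point needing care is the justification of the APPT necessary condition itself for the bipartition $A^n:B^n$. If it is not taken as known, I would prove it directly. Choose product basis vectors $\ket{00},\ket{01},\ket{10},\ket{11}$ in a $2\otimes 2$ subspace and a unitary placing $\mu_1$ on $\ket{00}$ and $\mu_{N},\mu_{N-1},\mu_{N-2}$ on the remaining three. Then rotate within that subspace so the partial transpose acquires an off-diagonal term whose $2\times 2$ minor is negative exactly when $\mu_1>\mu_{N-1}+2\sqrt{\mu_N\mu_{N-2}}$. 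This mirrors the two-qubit APPT characterization.
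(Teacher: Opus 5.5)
Your proposal is correct and follows essentially the same route as the paper. Both apply the APPT necessary condition $\mu_1\le\mu_{N-1}+2\sqrt{\mu_N\mu_{N-2}}$ to the spectrum of $\rho_{AB}^{\otimes n}$ and reduce the violation to $R^n>R+2\sqrt R$. The paper identifies the three smallest eigenvalues exactly as $\lambda_D^n,\lambda_D^{n-1}\lambda_{D-1},\lambda_D^{n-1}\lambda_{D-1}$ and then uses $\lambda_{D-1}/\lambda_D\le R$, whereas you bound them directly by $R\lambda_D^n$. Your extra remarks are valid refinements that the paper leaves implicit or handles by citation: the threshold exceeds $1$, so $n\ge 2$ holds automatically, and the $2\otimes 2$ embedding justifies the APPT inequality where the paper cites Hildebrand.
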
   

\begin{proof}
 Recall that absolutely separable states are absolutely PPT. Therefore, it is
sufficient to show that $\rho_{AB}^{\otimes n}$ violates a necessary condition
for absolute PPT. An absolutely PPT state with decreasingly ordered spectrum
$\{\lambda_1,\ldots,\lambda_D\}$ satisfies the spectral inequality
\cite{hildebrandappt}
    \begin{equation}
        \label{in:abppt}
     \lambda_1\leq \lambda_{D-1}+2\sqrt{\lambda_D\lambda_{D-2}} .
    \end{equation}

For the spectrum of $n\geq 2$ copies of the state $\rho_{AB}$, the largest
eigenvalue is $\lambda_1^n$, while the three smallest eigenvalues are
    \begin{equation}
        \lambda_D^n,\qquad
        \lambda_D^{n-1}\lambda_{D-1},\qquad
        \lambda_D^{n-1}\lambda_{D-1}.
    \end{equation}
Hence, in order for $\rho_{AB}^{\otimes n}$ to be absolutely PPT, its
eigenvalues have to satisfy
    \begin{align}
        \lambda_1^n
        &\leq
        \lambda_D^{n-1}\lambda_{D-1}
        +2\sqrt{\lambda_D^n\lambda_D^{n-1}\lambda_{D-1}} \nonumber\\
        &=\lambda_D^n
        \left(
        \frac{\lambda_{D-1}}{\lambda_D}
        +2\sqrt{\frac{\lambda_{D-1}}{\lambda_D}}
        \right).
        \label{eq:abpptproof}
    \end{align}
Dividing by $\lambda_D^n>0$, we obtain the necessary condition
    \begin{equation}
        R(\rho_{AB})^n
        \leq
        \frac{\lambda_{D-1}}{\lambda_D}
        +2\sqrt{\frac{\lambda_{D-1}}{\lambda_D}} .
    \end{equation}
Since
    \begin{equation}
        \frac{\lambda_{D-1}}{\lambda_D}
        \leq
        \frac{\lambda_1}{\lambda_D}
        =
        R(\rho_{AB}),
    \end{equation}
we have
    \begin{equation}
        \frac{\lambda_{D-1}}{\lambda_D}
        +2\sqrt{\frac{\lambda_{D-1}}{\lambda_D}}
        \leq
        R(\rho_{AB})+2\sqrt{R(\rho_{AB})}.
    \end{equation}
Therefore, if
    \begin{equation}
        R(\rho_{AB})^n
        >
        R(\rho_{AB})+2\sqrt{R(\rho_{AB})},
    \end{equation}
then the necessary APPT condition \eqref{in:abppt} is violated. Equivalently,
this violation occurs whenever
    \begin{equation}
        n >
        \frac{
        \ln\!\left(R(\rho_{AB})+2\sqrt{R(\rho_{AB})}\right)
        }{
        \ln R(\rho_{AB})
        } .
    \end{equation}
Thus $\rho_{AB}^{\otimes n}$ is not absolutely PPT, and since absolute
separability implies absolute PPT, we conclude that
$\rho_{AB}^{\otimes n}\notin AS$. The required number of copies decreases monotonically with \(R(\rho_{AB})\).
It diverges as \(R(\rho_{AB})\to 1^+\), i.e. as the state approaches the
maximally mixed state, while it approaches \(1\) as
\(R(\rho_{AB})\to\infty\).
\end{proof}

\end{document}

%% file: figures/path1.pdf_tex
\begingroup%
  \makeatletter%
  \providecommand\color[2][]{%
    \errmessage{(Inkscape) Color is used for the text in Inkscape, but the package 'color.sty' is not loaded}%
    \renewcommand\color[2][]{}%
  }%
  \providecommand\transparent[1]{%
    \errmessage{(Inkscape) Transparency is used (non-zero) for the text in Inkscape, but the package 'transparent.sty' is not loaded}%
    \renewcommand\transparent[1]{}%
  }%
  \providecommand\rotatebox[2]{#2}%
  \newcommand*\fsize{\dimexpr\f@size pt\relax}%
  \newcommand*\lineheight[1]{\fontsize{\fsize}{#1\fsize}\selectfont}%
  \ifx\svgwidth\undefined%
    \setlength{\unitlength}{372.64409583bp}%
    \ifx\svgscale\undefined%
      \relax%
    \else%
      \setlength{\unitlength}{\unitlength * \real{\svgscale}}%
    \fi%
  \else%
    \setlength{\unitlength}{\svgwidth}%
  \fi%
  \global\let\svgwidth\undefined%
  \global\let\svgscale\undefined%
  \makeatother%
  \begin{picture}(1,0.55372682)%
    \lineheight{1}%
    \setlength\tabcolsep{0pt}%
    \put(0,0){\includegraphics[width=\unitlength,page=1]{path1.pdf}}%
    \put(0.22487719,0.08635389){\color[rgb]{0,0,0}\makebox(0,0)[lt]{\lineheight{1.25}\smash{\begin{tabular}[t]{l}$\frac{\mathds{1}-\phi^{+ \Gamma}}{d_Ad_B-1}$\end{tabular}}}}%
    \put(-0.00223408,0.37808621){\color[rgb]{0,0,0}\makebox(0,0)[lt]{\lineheight{1.25}\smash{\begin{tabular}[t]{l}$(\frac{1}{d_Ad_B-1},...,\frac{1}{d_Ad_B-1},0)$\end{tabular}}}}%
    \put(0.41066215,0.2520373){\color[rgb]{0,0,0}\makebox(0,0)[lt]{\lineheight{1.25}\smash{\begin{tabular}[t]{l}$\frac{\mathds{1}}{d_Ad_B}$\end{tabular}}}}%
    \put(0.6123007,0.3347312){\color[rgb]{0.13333333,0.37254902,0.23529412}\makebox(0,0)[lt]{\lineheight{1.25}\smash{\begin{tabular}[t]{l}CAS\end{tabular}}}}%
    \put(0.83456713,0.49859128){\color[rgb]{0.38039216,0.37647059,0.92941176}\makebox(0,0)[lt]{\lineheight{1.25}\smash{\begin{tabular}[t]{l}AS\end{tabular}}}}%
    \put(0,0){\includegraphics[width=\unitlength,page=2]{path1.pdf}}%
    \put(0.53889897,0.00679117){\color[rgb]{0,0,0}\makebox(0,0)[lt]{\lineheight{1.25}\smash{\begin{tabular}[t]{l}$\tilde{\rho}$\end{tabular}}}}%
  \end{picture}%
\endgroup%